\DeclareMathAlphabet{\pazocal}{OMS}{zplm}{m}{n}
\newtheorem{lemma}{Lemma}
\newtheorem{definition}{Definition}
\let\oldnl\nl
\newcommand{\nonl}{\renewcommand{\nl}{\let\nl\oldnl}}
\renewcommand{\algocf@makecaption}[2]{%
	\addtolength{\hsize}{1.5\algomargin}%
	\parbox[t]{\hsize}{\algocf@captiontext{#1:}{#2}}%
	\addtolength{\hsize}{-1.5\algomargin}%
}
\newcommand{\cupdot}{\mathbin{\mathaccent\cdot\cup}}
\newcommand{\pto}{}
\newcommand{\pgets}{}
\DeclareRobustCommand{\pto}{\mathrel{\mathpalette\p@to@gets\to}}
\DeclareRobustCommand{\pgets}{\mathrel{\mathpalette\p@to@gets\gets}}
\newcommand{\p@to@gets}[2]{%
	\ooalign{\hidewidth$\m@th#1\mapstochar\mkern5mu$\hidewidth\cr$\m@th#1\to$\cr}%
}
\newcommand{\Attr}{\mathit{Attr}}
\newcommand{\TAttr}{\mathit{TAttr}}
\newcommand{\pr}{\textsf{pr}}
\newcommand{\invpr}{\pr^{-1}}
\newcommand{\measure}{\textsf{r}}
\newcommand{\invmeasure}{\measure^{-1}}
\newcommand{\invalpha}{{\overline{\alpha}}}
\newcommand{\sqdiamond}{\tikz [x=1.2ex,y=1.2ex,line width=.08ex] \draw (0,.5) -- (.5,1) -- (1,.5) -- (.5,0) -- (0,.5) -- cycle;}
\newcommand{\sqsq}{\tikz [x=0.95ex,y=1ex,line width=.1ex] \draw (0,0) -- (1,0) -- (1,1) -- (0,1) -- (0,0) -- cycle;}
\newcommand{\Even}{{\scalebox{0.95}{\sqdiamond}}}
\newcommand{\Odd}{{\scalebox{0.9}{$\sqsq$}}}
\newcommand{\Veven}{V_{\Even}}
\newcommand{\Vodd}{V_{\Odd}}
\title{A Parity Game Tale of Two Counters}
\author{
	Tom van Dijk\
	\institute{Formal Methods and Tools \\ University of Twente, Enschede}
	\email{t.vandijk@utwente.nl}
}
\begin{document}

\maketitle

\begin{abstract}

Parity games are simple infinite games played on finite graphs with a winning condition that is expressive enough to capture nested least and greatest fixpoints.
Through their tight relationship to the modal mu-calculus, they are used in practice for the model-checking and synthesis problems of the mu-calculus and related temporal logics like LTL and CTL.
Solving parity games is a compelling complexity theoretic problem, as the problem lies in the intersection of UP and co-UP and is believed to admit a polynomial-time solution, motivating researchers to either find such a solution or to find superpolynomial lower bounds for existing algorithms to improve the understanding of parity games.

We present a parameterized parity game called the Two Counters game, which provides an exponential lower bound for a wide range of attractor-based parity game solving algorithms.
We are the first to provide an exponential lower bound to priority promotion with the delayed promotion policy, and the first to provide such a lower bound to tangle learning.

\end{abstract}

\section{Introduction}
\label{sec:introduction}

Parity games are turn-based games played on a finite graph.
Two players \emph{Odd} and \emph{Even} play an infinite game by moving a token along the edges of the graph.
Each vertex is labeled with a natural number \emph{priority} and the winner of the game is determined by the parity of the highest priority that is encountered infinitely often.
Player Odd wins if this parity is odd; otherwise, player Even wins.

Parity games play a central role in several domains in theoretical computer science.
Their study has been motivated by their relation to various problems in formal verification and synthesis that can be reduced to solving parity games, as parity games capture the expressive power of nested least and greatest fixpoint operators~\cite{DBLP:conf/cav/Fearnley17}.
Deciding the winner of a parity game is polynomial-time equivalent to checking non-emptiness of non-deterministic parity tree automata~\cite{DBLP:conf/stoc/KupfermanV98}, and to the explicit model-checking problem of the modal $\mu$-calculus~\cite{DBLP:journals/tcs/EmersonJS01,DBLP:conf/dagstuhl/2001automata,DBLP:journals/tcs/Kozen83}.
Synthesis problems ask for an implementation of a system that satisfies the desired properties and reactive synthesis tools like \textsc{Strix} use parity games to synthesize controllers satisfying LTL formulas~\cite{DBLP:conf/cav/MeyerSL18}.
Solving the related parity game either results in a correct implementation or in a counterexample demonstrating how an adversary can make the property fail.

Parity games are also interesting for complexity theory, 
as the problem of determining the winner of a parity game is known to lie in $\text{UP}\cap\text{co-UP}$~\cite{DBLP:journals/ipl/Jurdzinski98},
which is contained in $\text{NP}\cap\text{co-NP}$~\cite{DBLP:journals/tcs/EmersonJS01}.
This problem is therefore unlikely to be NP-complete and it is widely believed that a polynomial solution exists.

Recent work proposes novel algorithms based on the notion of a tangle~\cite{DBLP:conf/cav/Dijk18}.
A tangle is a strongly connected subgame of a parity game for which one player has a strategy to win all cycles in the subgame.
Tangles play a fundamental role in various parity game algorithms, but most algorithms are not explicitly aware of tangles and can explore the same tangles repeatedly~\cite{DBLP:conf/cav/Dijk18}.
The algorithms proposed in~\cite{DBLP:conf/cav/Dijk18} solve parity games by explicitly computing tangles using attractor computation.

Tangles are related to snares~\cite{DBLP:conf/lpar/Fearnley10} and quasi-dominions~\cite{DBLP:conf/cav/BenerecettiDM16}, with the critical difference that tangles are strongly connected whereas snares and quasi-dominions may be composed of multiple tangles.
Thus it is an obvious question whether the subexponential counterexample of Friedmann~\cite{DBLP:journals/dam/Friedmann13} to Fearnley's snare-based algorithm~\cite{DBLP:conf/lpar/Fearnley10}
can be adapted for tangle learning.
We show that this is possible and that the resulting counterexample is powerful enough to be a difficult lower bound to a wide range of algorithms.

We propose a parameterized parity game based on binary counters.
The goal is to trick the algorithms to explore the progression of the counters, only solving the game when all bits are set.
The critical ingredient to make these games difficult for attractor-based algorithms is to use two intertwined binary counters, one for each player, that progress together.
We show empirically that these games are difficult for a wide range of algorithms, in particular for those based on attractor computation, such as priority promotion~\cite{DBLP:conf/cav/BenerecettiDM16,DBLP:journals/fmsd/BenerecettiDM18} and its variations~\cite{DBLP:journals/corr/BenerecettiDM16,DBLP:conf/hvc/BenerecettiDM16}, tangle learning~\cite{DBLP:conf/cav/Dijk18}, and the recursive algorithm by Zielonka~\cite{DBLP:journals/apal/McNaughton93,DBLP:journals/tcs/Zielonka98}.
For these, we provide the exponential lower bound of
$\Omega(2^{\sqrt{n}})$.
We are the first to provide an exponential lower bound for the delayed promotion policy of priority promotion and for tangle learning.

\section{Preliminaries}
\label{sec:preliminaries}


A parity game $\Game$ is a tuple $(\Veven, \Vodd, E, \pr)$ where $V=\Veven\cupdot \Vodd$ is a set of $n$ vertices partitioned into the sets $\Veven$ controlled by player \emph{Even} and $\Vodd$ controlled by player \emph{Odd}, and $E\subseteq V\times V$ is a left-total binary relation describing all moves, i.e., every vertex has at least one successor.
We also write $E(u)$ for all successors of $u$ and $u\rightarrow v$ for $v\in E(u)$.
The function $\pr\colon V\rightarrow \{0,1,\dotsc,d\}$
assigns to each vertex a \emph{priority}, where $d$ is the highest priority in the game.
We write $\alpha\in\{\Even,\Odd\}$ to denote a player $\Even$ or $\Odd$ and $\invalpha$ for the opponent of $\alpha$. 
In visual representations of parity games, we use diamonds and boxes for vertices of the two players. 

We write 
$\pr(V)$ for $\max\{\pr(v)\mid v\in V\}$ and 
similarly $\pr(\Game)$ for the highest priority in $\Game$.
Furthermore, we write $\pr^{-1}(i)$ for all vertices with the priority $i$.
A \emph{play} $\pi=v_0 v_1 \dots$ is an infinite sequence of vertices consistent with $E$, i.e.,
$v_i \rightarrow v_{i+1}$ for all $i\geq 0$. 
We denote with $\inf(\pi)$ the vertices in $\pi$ that occur infinitely many times in $\pi$.
Player Even wins a play $\pi$ if $\pr(\inf(\pi))$ is even; player Odd wins if $\pr(\inf(\pi))$ is odd.
We write $\text{Plays}(v)$ to denote all plays starting at vertex $v$
and $\text{Plays}(V)$ for $\{\pi\in\text{Plays}(v)\mid v\in V\}$.

A (positional) \emph{strategy} $\sigma\colon V\nrightarrow V$ assigns to each vertex in its domain a single successor in $E$, i.e., $\sigma\subseteq E$.
We refer to a strategy of player $\alpha$ to restrict the domain of $\sigma$ to $V_\alpha$. 
In the remainder, all strategies $\sigma$ are of a player $\alpha$.
Given a strategy $\sigma$, we define the subgame induced by $\sigma$ as $\Game[\sigma] := (V_\Even, V_\Odd, E', \pr)$ where $E':=\{ uv \in E \mid u\notin\textup{dom}(\sigma) \lor \sigma(u)=v \}$.
We write $\text{Plays}(v, \sigma)$ for all plays from $v$ consistent with $\sigma$, i.e., $\text{Plays}_{\Game[\sigma]}(v)$, 
and $\text{Plays}(V,\sigma)$ for $\{\pi\in\text{Plays}(v,\sigma)\mid v\in V\}$.

A basic result for parity games is that they are memoryless determined~\cite{DBLP:conf/focs/EmersonJ91}, i.e., each vertex is either winning for player Even or for player Odd, and the players have a positional strategy for their winning vertices.
Player $\alpha$ wins vertex $v$ if there is a strategy $\sigma$ such that player $\alpha$ wins all plays in $\text{Plays}(v,\sigma)$.

A set of vertices $U$ is called \emph{closed} with respect to player $\alpha$ if all vertices that belong to player $\alpha$ have a successor in $U$ and all vertices of player $\invalpha$ only have successors inside $U$.
That is, player $\invalpha$ cannot leave $U$.
If a set of vertices is closed in the full game $\Game$ then we also call the set \emph{globally closed} and if it is closed in some subgame of $\Game$ then we call the set \emph{locally closed} in that subgame.

A \emph{dominion} $D$ is a (globally) closed set of vertices for which player $\alpha$ has a strategy $\sigma$ such that player $\alpha$ wins all plays in $\text{Plays}(D,\sigma)$. 
That is, player $\alpha$ wins all cycles in the subgame $\Game[D,\sigma]$ induced by $D$ and $\sigma$. 
We also write a \emph{$p$-dominion} for a dominion where $p$ is the highest priority encountered infinitely often in plays consistent with $\sigma$, i.e., $p:=\max\{\pr(\inf(\pi))\mid \pi\in\textrm{Plays}(D,\sigma)\}$.

Several algorithms for solving parity games employ \emph{attractor computation}.
Given a set of vertices $A$,
the attractor of $A$ for a player $\alpha$ contains exactly those vertices from which player $\alpha$ can ensure arrival in $A$.
We write $\Attr^\Game_\alpha(A)$ to attract vertices in $\Game$ to $A$ as player $\alpha$,
i.e., the least fixpoint of
\[
Z := A \cup \{ v\in V_\alpha \mid E(v)\cap Z \neq \emptyset \} \cup \{ v\in V_{\invalpha} \mid E(v)\subseteq Z \}
\]
We compute the $\alpha$-attractor of $A$ with a backward search from $A$, initially setting $Z:=A$ and iteratively adding $\alpha$-vertices with a successor in $Z$ and $\invalpha$-vertices with no successors outside $Z$.
We call a set of vertices $A$ $\alpha$-maximal if $A=\Attr^\Game_\alpha(A)$.
The attractor also yields an ``attractor strategy'' by selecting a vertex in $Z$ for every added $\alpha$-vertex $v$ when $v$ is added to $Z$, and by selecting a vertex in $Z$ for all $\alpha$-vertices in $A$ that do not yet have a strategy but can play to $Z$.

Attractors are often used to attract to a set $A:=\pr^{-1}(\pr(\Game))$ for the player that wins the highest priority in $\Game$.
By repeatedly computing this attractor and removing it from the game, the game is decomposed into \emph{regions} each associated with the priority $p$ and player $\alpha=p \bmod 2$.
Each such region has the property that all plays that stay in the region are won by player $\alpha$, and that player $\invalpha$ can leave the region to higher $\alpha$-regions and, only via a vertex of priority $p$, to lower regions.
The recursive algorithm~\cite{DBLP:journals/tcs/Zielonka98} investigates whether (lower) $\alpha$-regions attract vertices from higher $\invalpha$-regions.
Priority promotion~\cite{DBLP:conf/cav/BenerecettiDM16} merges locally closed regions with higher $\alpha$-regions that the opponent can escape to.
Tangle learning~\cite{DBLP:conf/cav/Dijk18} computes tangles from locally closed regions which are used to improve the attractor-based decomposition. 

\section{Recursive Algorithm}
\label{sec:zielonka}

\begin{algorithm}[tbp]
\Def{\Zielonka{$\Game$}}{
	\lIf(\tcp*[f]{solve the empty game}){$\Game = \emptyset$}{\Return $\emptyset, \emptyset$}
	$p \leftarrow \pr(\Game),\ \alpha \leftarrow \pr(\Game) \bmod 2$ \; 
	$A$ $\leftarrow$ $\smash{\text{Attr}^{\Game}_{\alpha}(\pr^{-1}(p))}$ \tcp*{attract to highest priority}
	$W_\Even, W_\Odd$ $\leftarrow$ \Zielonka{$\Game\setminus A$} \tcp*{compute remaining subgame}
	$B$ $\leftarrow$ $\smash{\text{Attr}^{\Game}_{\invalpha}(W_{\invalpha})}$ \tcp*{attract to opponent region}
	\If{$B = W_{\invalpha}$}{
		$W_{\alpha}$ $\leftarrow$ $W_\alpha \cup A$ \tcp*{$A$ is won by $\alpha$}
	}
	\Else{
		$W_\Even, W_\Odd$ $\leftarrow$ \Zielonka{$\Game\setminus B$} \tcp*{recompute remaining subgame}
		$W_{\invalpha}$ $\leftarrow$ $W_{\invalpha} \cup B$ \tcp*{$B$ is won by $\invalpha$}
	}
	\Return {$W_{\Even}, W_{\Odd}$} \;
}
\caption{Zielonka's recursive algorithm.}
\label{alg:zielonka}
\end{algorithm}



The recursive algorithm by Zielonka~\cite{DBLP:journals/tcs/Zielonka98} is a well-known algorithm to solve parity games with fast practical performance~\cite{DBLP:conf/tacas/Dijk18,DBLP:conf/atva/FriedmannL09}.
Although it requires exponential time in the worst-case, tight bounds are known for various classes of games~\cite{DBLP:journals/corr/GazdaW13}.
Very recently, Parys has proposed a version of the recursive algorithm that runs in quasi-polynomial time~\cite{DBLP:journals/corr/abs-1904-12446}.

The recursive algorithm is based on attractor computation.
See Algorithm~\ref{alg:zielonka}.
We omit computing the winning strategy.
At each step, given the current subgame $\Game$, the algorithm computes the attractor $A$ to the highest vertices in $\Game$ and recursively solves the subgame $\Game\setminus A$. 
If the opponent $\invalpha$ can attract vertices in $A$ to $W_\invalpha$, then $\invalpha$ wins $B := \text{Attr}^\Game_\invalpha(W_\invalpha)$.
The vertices attracted to $W_\invalpha$ are vertices of priority $p$ that are now estimated to be won by player $\invalpha$ plus vertices of priority $<p$ that player $\invalpha$ attracts to $W_\invalpha$ via vertices of priority $p$.
The algorithm then computes the remaining subgame $\Game\setminus B$ recursively and returns the winning regions.
If player $\invalpha$ does not attract any vertices to $W_\invalpha$, then no second recursion is needed.
The winning strategies are trivially obtained as the attractor strategy plus any successor in the set $W_\alpha\cup A$ for $\alpha$-vertices with priority $p$.
One can also view the recursive algorithm as decomposing the game into $\alpha$-maximal regions and refining these regions starting with the lowest region.

\section{Priority Promotion}
\label{sec:prioprom}

\begin{algorithm}[tbp]
\Def{\SearchDominion{$\Game$}}{
	$\measure \leftarrow V \mapsto \bot$ \tcp*{all vertices to $\bot$}
	$p \leftarrow \pr(\Game)$ \tcp*{start at highest priority}
	\While{True}{
		$\alpha \leftarrow p \bmod 2$ \tcp*{current player}
		$\textup{Subgame} \leftarrow V\setminus \{ v \mid \measure(v) > p \}$ \tcp*{current subgame}
		$A \leftarrow \invmeasure(p) \cup (\invpr(p) \cap \textup{Subgame})$ \tcp*{current region/attractor target}
		$Z \leftarrow \smash{\text{Attr}^{\Game\cap \textup{Subgame}}_{\alpha}(A)}$ \tcp*{attract to $A$}
		$\textup{Open} \leftarrow \{ v \in Z \cap V_{\alpha} \mid E(v) \cap Z = \emptyset \}$ \tcp*{open $\alpha$-vertices}
		$\textup{Esc} \leftarrow E(Z \cap V_{\invalpha})\setminus Z$ \tcp*{escape targets for $\invalpha$}
		\If(\tcp*[f]{\textcolor{black}{\underline{is the region open}}?}){$\textup{Open}\neq\emptyset \vee (\textup{Esc}\cap \textup{Subgame})\neq\emptyset$}{
			$\measure \leftarrow \measure[Z \mapsto p]$ \tcp*{record the region}
			$p \leftarrow \pr(\textup{Subgame}\setminus Z)$ \tcp*{continue with next highest priority}
		}
		\ElseIf(\tcp*[f]{\textcolor{black}{\underline{locally closed}}?}){$\textup{Esc}\neq\emptyset$}{
			$p \leftarrow \min\{ \measure(v) \mid v \in \textup{Esc} \}$ \tcp*{set $p$ to lowest escape}
			$\measure \leftarrow \measure[Z \mapsto p][\{v\mid \measure(v)<p\} \mapsto \bot]$ \tcp*{merge, reset lower than $p$}
		}
		\Else(\tcp*[f]{\textcolor{black}{\underline{globally closed}}?}){
			$Z \leftarrow \Attr^\Game_\alpha(Z)$ \tcp*{maximize the dominion}
			\Return $\alpha,Z$ \tcp*{dominion of player $\alpha$!}
		}
	}
}
\BlankLine
\Def{\Prioprom{$\Game$}}{
	$W_\Even\leftarrow\emptyset$, $W_\Odd\leftarrow\emptyset$ \tcp*{initialize sets}
	\While{$\Game\neq\emptyset$}{
		$\alpha, D$ $\leftarrow$ \SearchDominion{$\Game$} \tcp*{compute the next dominion}
		$W_\alpha \leftarrow W_\alpha \cup D$ \tcp*{add dominion to winning region of player $\alpha$}
		$\Game \leftarrow \Game\setminus D$ \tcp*{remove dominion from game}
	}
	\Return $W_\Even, W_\Odd$
}
\caption{The priority promotion algorithm.}
\label{alg:prioprom}
\end{algorithm}


Priority promotion was proposed in \cite{DBLP:conf/cav/BenerecettiDM16,DBLP:journals/fmsd/BenerecettiDM18} and improved in~\cite{DBLP:journals/corr/BenerecettiDM16,DBLP:conf/hvc/BenerecettiDM16}.
Like the recursive algorithm, priority promotion computes the top-down $\alpha$-maximal decomposition of the game into regions.
These regions have the property that all plays that stay in the region are won by player $\alpha$.
Regions are \emph{locally closed} when all vertices of player $\alpha$ have a successor in the region and no vertices of player $\invalpha$ can play to lower regions.
Because all regions are $\alpha$-maximal, player $\invalpha$ can only escape to ``higher'' regions of player $\alpha$.
Locally closed regions are merged with the lowest higher region to which player $\invalpha$ can escape, after which the decomposition of the game is refined by attracting to this merged region and recomputing the lower regions.
This is called promoting, as the lower region is ``promoted'' to the higher region.

See Algorithm~\ref{alg:prioprom}.
Again we omit explicitly computing the winning strategies, but they are computed trivially by the attractors.
The \texttt{search-dominion} algorithm is given a game and returns a dominion and the winner of this dominion.
By repeatedly calling \texttt{search-dominion} and removing the dominion from $\Game$, priority promotion solves parity games.
The \texttt{search-dominion} algorithm refines the regions until a dominion is found.
The function $\textsf{r}$ records the current region (priority $p$) of each vertex, or $\bot$ if the vertex is not in a region.
After promoting a region to a higher priority $p$ (lines~15--16) the algorithm resets all regions below $p$ and then attracts to the merged region at lines~5--8.

Consider some region of priority $4$ that is promoted to priority $16$.
By attracting to the combined region $16$, priority promotion attracts vertices from regions $5\dots15$ of player $\invalpha$ that are attracted to the lower region $4$ in one step, while the recursive algorithm attracts from each affected region of $\invalpha$ separately.
Another difference with the recursive algorithm is that priority promotion always promotes the highest locally closed region whereas the recursive algorithm starts refining with the lowest regions.

The original algorithm~\cite{DBLP:conf/cav/BenerecettiDM16} resets all lower regions after a promotion.
These lower regions might be the result of earlier promotions and these promotions are then often repeated.
The PP+ extension~\cite{DBLP:journals/corr/BenerecettiDM16} only resets lower regions of player $\invalpha$, since promoting regions of player $\alpha$ can only ``break'' regions of player $\invalpha$.
The region restoration (RR) extension~\cite{DBLP:conf/hvc/BenerecettiDM16} further improves upon PP+ by only resetting a lower region if the earlier attractor strategy of the player for the vertices that are still in the original region now leaves this region to a higher region (of the opponent).
The delayed promotion (DP) policy (also~\cite{DBLP:journals/corr/BenerecettiDM16}) uses a heuristic to delay promotions.
The algorithm records which regions are the result of a promotion and 
if another promotion might affect a merged region, then this promotion is delayed.
The delayed promotions are instead performed on a copy of the decomposition.
When no more normal promotions can be performed, the delayed promotions of one player are applied from this copy, where this player is the one that wins the highest merged region in the copy.
The other delayed promotions are discarded and the algorithm continues with an empty record of merged regions.

\section{Tangles and Tangle Learning}
\label{sec:tangles}

Earlier work introduced tangles as substructures of parity games~\cite{DBLP:conf/cav/Dijk18}.
Tangles are strongly connected subgames of a parity game for which one player has a strategy to win all cycles in the subgame.
The losing player must therefore escape the tangle, so we extend attractor computation to simultaneously attract all vertices in a tangle when the losing player must escape to the attracting set.
This leads to the \emph{tangle learning} algorithm, which computes new tangles along the decomposition of the game, computed using the extended attractor.
The algorithm solves parity games by finding tangles that are dominions.

\begin{definition}
	A \emph{tangle} is a pair $T=(U,\sigma)$ where $U$ is a nonempty set of vertices $U\subseteq V$ and $\sigma\colon U_\alpha\to U$ is a strategy for all vertices in $U$ of player $\alpha := \pr(U)\bmod 2$, such that the subgame $\Game[U,\sigma]$ induced by the tangle
	is strongly connected and player $\alpha$ wins all cycles in $\Game[U,\sigma]$.
\end{definition}
We say that a tangle with the highest priority $p$ is a $p$-tangle, and that it is an $\alpha$-tangle. 
We often use $T$ for just the set of vertices in the tangle, e.g., 
writing $v\in T$ if a vertex $v$ is in the tangle.
We write $E_T(T)$ for the successors outside $T$ from $\invalpha$-vertices in $T$,
$E_T(T) := \{\, v \in V\setminus T \mid u \to v \mid u\in T\cap V_\invalpha \,\}$.

We have several basic observations related to tangles~\cite{DBLP:conf/cav/Dijk18}.
A \emph{closed} $p$-tangle, from which player $\invalpha$ cannot leave, is a $p$-dominion.
Vice versa, every $p$-dominion contains at least one $p$-tangle.
Furthermore, tangles are often composed of subtangles with lower priorities.
We thus find a hierarchy of tangles in any dominion $D$ with winning strategy $\sigma$
by computing the set of winning priorities
$\{\,\pr(\inf(\pi))\mid \pi\in\textrm{Plays}(D,\sigma)\,\}$.
There is a $p$-tangle in $D$ for every $p$ in this set.
Tangles are thus intuitive substructures of dominions.
One can find all tangles in a dominion $D$ by computing $\{\,\inf(\pi)\mid \pi\in\textrm{Plays}(D,\sigma)\,\}$.
See for example Figure~\ref{fig:tangled}. Player Odd wins 
with 
strategy $\{\,\textbf{d}\rightarrow\textbf{e}\,\}$.
Player Even can avoid priority $5$, but then loses with priority $3$.
The 5-dominion contains a $5$-tangle and a $3$-tangle.

\begin{figure}[b]
	\vspace{-1em}
	\centering
	\scalebox{0.9}{
		\begin{tikzpicture}		
		\tikzset{every edge/.append style={>=stealth,->,solid,thick,draw,text height=0.5ex,text depth=0.2ex}}
		\tikzset{every node/.append style={minimum size=6mm,draw,fill=black!10}}
		\tikzset{my label/.style args={#1:#2}{
				append after command={
					($(\tikzlastnode.center)$) coordinate [label={[label distance=5mm,black]#1:\textbf{\strut #2}}]
		}}}
		\tikzstyle{even}=[diamond,minimum size=6mm,draw]
		\tikzstyle{odd}=[regular polygon,regular polygon sides=4,minimum size=6mm,draw]
		\tikzstyle{seven}=[even,fill=white]
		\tikzstyle{sodd}=[odd,fill=white]
		
		\draw[pattern=crosshatch,pattern color=blue!15] plot [smooth cycle, tension=0.5]
		coordinates {(-0.5,1.85) (-0.5,-0.4) (2,-0.45) (2,1.9)};
		\draw[fill,white] plot [smooth cycle, tension=0.5]
		coordinates {(0.85,1.8) (0.8,-0.25) (1.95,-0.35) (2,1.8)};		
		\draw[pattern=crosshatch dots,pattern color=blue!35] plot [smooth cycle, tension=0.5]
		coordinates {(0.85,1.8) (0.8,-0.25) (1.95,-0.35) (2,1.8)};		
		\draw (0,1.5)     node[seven, my label={above:b}] (a) {5};
		\draw (-1.5,0.75) node[even, my label={above:a}] (b) {6};
		\draw (0,0)       node[odd, my label={below:d}]  (c) {1};
		\draw (1.5,0)     node[seven, my label={below:e}] (d) {3};
		\draw (1.5,1.5)   node[even, my label={above:c}] (e) {2};
		\draw (a) edge (c);
		\draw (b) edge (a);
		\draw (c) edge (d);
		\draw (c) edge (b);
		\draw (d) edge [bend left=20] (e);
		\draw (e) edge [bend left=20] (d);
		\draw (e) edge (a);
		\end{tikzpicture}
	}
	\vspace{-1em}
	\caption{A $5$-dominion with a $5$-tangle and a $3$-tangle}
	\label{fig:tangled}
\end{figure} 

As described in~\cite{DBLP:conf/cav/Dijk18}, tangles play a central role for various parity game solving algorithms, as they implicitly explore tangles and may explore the same tangles repeatedly, especially when tangles are nested.
This motivates algorithms that explicitly target tangles, such as the ``snare memoization'' extension to strategy improvement~\cite{DBLP:conf/lpar/Fearnley10} and the ``tangle attractor'' approach of tangle learning~\cite{DBLP:conf/cav/Dijk18}.

\begin{algorithm}[tbp]
\Def{\Search{$\Game$, $\textbf{\textup{T}}$}}{
	\lIf(\tcp*[f]{no tangles in an empty game}){$\Game=\emptyset$}{\Return $\emptyset$}
	$p \leftarrow \pr(\Game)$, $\alpha \leftarrow \pr(\Game) \bmod 2$ \tcp*{obtain current priority and player}
	$Z,\sigma \leftarrow \TAttr^{\Game,\textbf{\textup{T}}}_\alpha\big(\pr^{-1}(p)\big)$ \tcp*{tangle-attract to highest vertices}
	$O \leftarrow \{ v\in Z_{\alpha} \mid E(v)\cap Z=\emptyset \} \cup \{ v\in Z_{\invalpha} \mid E(v)\not\subseteq Z \}$ \tcp*{compute open vertices}
	$Y \leftarrow \textbf{ if } O=\emptyset \textbf{ then } \texttt{bottom-sccs(}Z,\sigma\texttt{)} \textbf{ else }\emptyset$ \tcp*{compute tangles if closed}
	\Return $Y \cup \texttt{search(}\Game\setminus Z, \textbf{\textup{T}}\texttt{)}$ \tcp*{find tangles in remaining game}
}
\BlankLine
\Def{\TangleLearning{$\Game$}}{
	$W_\Even\leftarrow\emptyset$,
	$W_\Odd\leftarrow\emptyset$,
	$\textbf{\textup{T}}\leftarrow\emptyset$ \tcp*{initialize sets}
	\While{$\Game\neq\emptyset$}{
		$Y$ $\leftarrow$ \Search{$\Game,\textbf{\textup{T}}$} \tcp*{search for tangles}
		$\textbf{\textup{T}}\leftarrow \textbf{\textup{T}} \cup \{\,T\in Y\mid E_T(T)\neq\emptyset\,\}$ \tcp*{add new open tangles to $\textbf{\textup{T}}$}
		$D\leftarrow\{\,T\in Y\mid E_T(T) = \emptyset\,\}$ \tcp*{obtain new dominions}
		\If{$D\neq\emptyset$}{
			$W_\Even \leftarrow {W_\Even\cup \TAttr_\Even^{\Game,\textbf{\textup{T}}}(\bigcup D_\Even)}, W_\Odd \leftarrow {W_\Odd\cup \TAttr_\Odd^{\Game,\textbf{\textup{T}}}(\bigcup D_\Odd)}$ \tcp*{extend dominions}
			$\Game\leftarrow\Game\setminus (W_\Even \cup W_\Odd)$ \tcp*{remaining game}
			$\textbf{\textup{T}}$ $\leftarrow$ $\textbf{\textup{T}}\cap\Game$ \tcp*{remaining tangles}
		}
	}
	\Return $W_\Even, W_\Odd$
}
\caption{The tangle learning algorithm.}
\label{alg:tanglelearning}
\end{algorithm}

Tangle learning is based on the tangle attractor,
extending
attractor computation to 
attract vertices of $\alpha$-tangles where player $\invalpha$ must play to the attracting set, writing $\TAttr^{\Game, \textbf{T}}_\alpha(A)$ to attract vertices in subgame $\Game$ and vertices of $\alpha$-tangles in the set \textbf{T} that are in subgame $\Game$ to $A$ as player $\alpha$,
i.e., the least fixpoint of
\[
\begin{tabu}{r@{}l}
Z := A &\ \cup\  \{\;v\in V_\alpha \mid E(v)\cap Z \neq \emptyset\;\} \ \cup\ \{\;v\in V_{\invalpha} \mid E(v)\subseteq Z\;\} \\
 & \ \cup\  \{\; v\in T\mid T\in \textbf{\textup{T}}\land T\subseteq V\land \pr(T)\equiv_2\alpha\land
E_T(T)\subseteq Z \;\}
\end{tabu}
\]

See further~\cite{DBLP:conf/cav/Dijk18}.
The tangle learning algorithm (Algorithm~\ref{alg:tanglelearning}) repeatedly decomposes the game with the tangle attractor.
By computing the bottom strongly connected components of the regions, new tangles are obtained and added to the set of tangles \textbf{T} (line~12).
Each iteration of this algorithm adds new tangles to this set, resulting in a different decomposition each time.
Tangles without escapes, i.e., closed tangles, are dominions, which are then maximized using the attractor and removed from the game (lines~13--16).
After removing the dominions, all tangles with solved vertices are removed from \textbf{T} (line~17).
As before, winning strategies are obtained from the attractor.
We omit explicitly computing the winning strategies, but we do need the strategy that the tangle attractor yields for computing the tangles at line~6 as explained in~\cite{DBLP:conf/cav/Dijk18}.
A variation called ``alternating tangle learning'' is also described in~\cite{DBLP:conf/cav/Dijk18}, which alternates between exhaustively computing only tangles for player Odd or for player Even.

A closed region is essentially a collection of possibly unconnected tangles and vertices attracted to these tangles.
Compared to tangle learning, priority promotion ``attracts'' all tangles in the region to the lowest region that would attract a tangle in the region, and then discards knowledge of the tangles.

\section{Distractions}
\label{sec:distractions}

\begin{figure}[t]
	\centering
	\resizebox{0.7\linewidth}{!}{
		\begin{tabu} to 0.8\linewidth {X[lm]X[0.2,cm]X[rm]}
			\begin{tikzpicture}[baseline=(current bounding box.center)]	
			\tikzset{every edge/.append style={>=stealth,->,solid,thick,draw,text height=0.5ex,text depth=0.2ex}}
			\tikzset{every node/.append style={minimum size=6mm,draw,fill=black!10}}
			\tikzset{my label/.style args={#1:#2}{
					append after command={
						($(\tikzlastnode.center)$) coordinate [label={[label distance=5mm,black]#1:\textbf{\strut #2}}]
			}}}
			\tikzstyle{even}=[diamond,minimum size=6mm,draw]
			\tikzstyle{odd}=[regular polygon,regular polygon sides=4,minimum size=6mm,draw]
			\tikzstyle{seven}=[even,fill=white]
			\tikzstyle{sodd}=[odd,fill=white]
			
			\draw[pattern=crosshatch,pattern color=blue!25] plot [smooth cycle, tension=0.5]
			coordinates {(-2.0,2) (-2.0,0.9) (0.5,0.9) (0.5,2)};
			
			\draw[pattern=crosshatch,pattern color=red!25] plot [smooth cycle, tension=0.5]
			coordinates {(-0.5,0.5) (-0.5,-0.5) (1.9,-0.5) (1.9,0.5)};
			
			\draw (-1.5,1.5)  node[sodd,  my label={above:a}] (a) {4};
			\draw (0.0,1.5)   node[seven, my label={above:b}] (b) {0};
			\draw (1.5,1.5)   node[sodd,  my label={above:c}] (c) {2};
			\draw (-1.5,0)    node[seven, my label={below:d}] (d) {1};
			\draw (0,0)       node[sodd,  my label={below:e}] (e) {0};
			\draw (1.5,0)     node[seven, my label={below:f}] (f) {5};
			
			\draw (a) edge (d);
			\draw (b) edge (a);
			\draw (b) edge [bend left=20] (c);
			\draw (c) edge [bend left=20] (b);
			\draw (d) edge [bend left=20] (e);
			\draw (e) edge [bend left=20] (d);
			\draw (e) edge (f);
			\draw (f) edge (c);
			\draw (c) edge[in=-30,out=30,loop,looseness=4] (c);
			\draw (d) edge[in=-145,out=145,loop,looseness=6] (d);
			\end{tikzpicture}
			&
			$\Rightarrow$
			&
			\begin{tikzpicture}[baseline=(current bounding box.center)]	
			\tikzset{every edge/.append style={>=stealth,->,solid,thick,draw,text height=0.5ex,text depth=0.2ex}}
			\tikzset{every node/.append style={minimum size=6mm,draw,fill=black!10}}
			\tikzset{my label/.style args={#1:#2}{
					append after command={
						($(\tikzlastnode.center)$) coordinate [label={[label distance=5mm,black]#1:\textbf{\strut #2}}]
			}}}
			\tikzstyle{even}=[diamond,minimum size=6mm,draw]
			\tikzstyle{odd}=[regular polygon,regular polygon sides=4,minimum size=6mm,draw]
			\tikzstyle{seven}=[even,fill=white]
			\tikzstyle{sodd}=[odd,fill=white]
			
			\draw[pattern=crosshatch,pattern color=blue!25] plot [smooth cycle, tension=0.5]
			coordinates {(-0.40,2) (-0.50,0.9) (0.7,0.7) (0.9,-0.4) (2.05,-0.40) (2.30,1.90)};
			
			\draw[pattern=crosshatch,pattern color=red!25] plot [smooth cycle, tension=0.5]
			coordinates {(-2.0,1.9) (-2.4,-0.45) (0.45,-0.45) (0.5,0.5) (-0.8,0.7) (-0.95,1.85)};
			
			\draw (-1.5,1.5)  node[sodd,  my label={above:a}] (a) {4};
			\draw (0.0,1.5)   node[seven, my label={above:b}] (b) {0};
			\draw (1.5,1.5)   node[sodd,  my label={above:c}] (c) {2};
			\draw (-1.5,0)    node[seven, my label={below:d}] (d) {1};
			\draw (0,0)       node[sodd,  my label={below:e}] (e) {0};
			\draw (1.5,0)     node[seven, my label={below:f}] (f) {5};
			
			\draw (a) edge (d);
			\draw (b) edge (a);
			\draw (b) edge [bend left=20] (c);
			\draw (c) edge [bend left=20] (b);
			\draw (d) edge [bend left=20] (e);
			\draw (e) edge [bend left=20] (d);
			\draw (e) edge (f);
			\draw (f) edge (c);
			\draw (c) edge[in=-30,out=30,loop,looseness=4] (c);
			\draw (d) edge[in=-145,out=145,loop,looseness=6] (d);
			\end{tikzpicture}
		\end{tabu}
	}
	\caption{Vertex \textbf{a} is a distraction. The distraction is removed by the opponent's tangle $\{\,\textbf{d}\,\}$. We can then learn tangle $\{\,\textbf{b},\textbf{c}\,\}$.
		Similarly, vertex \textbf{f} distracts tangle $\{\,\textbf{d},\textbf{e}\,\}$.
	}
	\label{fig:distraction}
\end{figure}

We introduce the notion of a distraction~\cite{DBLP:conf/cav/Dijk18,fpi} to further our understanding of parity game solving algorithms.
%
A \textbf{distraction} for player $\alpha$ is a vertex $v$ with an $\alpha$-priority $p$, such that if player $\alpha$ always plays to reach $v$ along paths of priorities $\leq p$, then player $\invalpha$ wins $v$ and all vertices that reach $v$.
That is, a distraction for $\alpha$ is a high value vertex $v$ with an $\alpha$-priority that player $\invalpha$ can win if player $\alpha$ always tries to visit it.
%
This occurs when player $\invalpha$ can attract $v$ to some vertex of a higher priority of player $\invalpha$,
or when player $\invalpha$ can attract $v$ to an $\invalpha$-dominion.

For attractor-based algorithms, the difficult distractions are those that are not immediately clear from the initial decomposition of the game, that is, when a distraction is attracted by the opponent \emph{via a lower tangle}.
%
%
%
%
%
Initially player $\alpha$ believes they should play to $v$, but after solving the lower subgame or finding the attracting tangle in the lower subgame, playing to $v$ is believed to be good for the opponent.
See e.g. Fig.~\ref{fig:distraction}.
Vertices \textbf{a} and \textbf{f} are distractions.
This is however not immediately clear.

We say that a distraction $v$ {distracts a tangle} $T$, with $v\notin T$, if there is some \emph{distracted} vertex $w\in T$, that is attracted by $v$ but can also avoid $v$ to play inside tangle $T$ and player $\invalpha$ cannot escape $T$ towards $v$.
In order to learn the distracted tangle, the solver first needs to identify and avoid the distraction.
As argued in~\cite{DBLP:conf/cav/Dijk18}, tangle learning identifies distractions by learning the $\invalpha$-tangle that attracts the distraction.

Zielonka's recursive algorithm and priority promotion also rely on this mechanism to avoid distractions.
In priority promotion, whenever a merged region attracts the highest vertices from $\invalpha$-regions, then these vertices are distractions.
In the recursive algorithm, the second recursive call is only performed if the opponent attracts from $A$, that is, when it discovers the attracted $p$-vertices in $A$ are distractions.
Thus, distractions make these algorithms slow, especially when lower distractions must be identified before \emph{and} after removing a higher distraction.
If there are no distractions in the game, then both the recursive algorithm and priority promotion require at most $n$ recursive calls or promotions.

Algorithms that are based on play valuations such as progress measures and strategy iteration (see~\cite{DBLP:conf/tacas/Dijk18}) rely on a fundamentally different method to deal with distractions.
They assign a higher value to vertices with $\alpha$'s priorities along the attractor paths towards vertices in $A$, and if a vertex in $A$ is a distraction, then it does not increase in value, causing these algorithms to avoid it.


\section{A Tale of Two Counters}
\label{sec:twobc}

The design of our parameterized parity game is \emph{loosely} inspired upon earlier work by Friedmann~\cite{DBLP:journals/dam/Friedmann13}, which provides an exponential lower bound for the non-oblivious strategy improvement algorithm~\cite{DBLP:conf/lpar/Fearnley10}.
That algorithm is a variation of strategy improvement that learns snares, which are related to tangles.
However, attractor-based algorithms trivially solve these games and require a different approach.

The parity game consists of two binary counters.
Each $k$th bit of the counter is a structure that contains $2^k$ many tangles.
A bit in the counter of player $\alpha$ is set by attracting a distraction for player $\invalpha$ called a ``low'' vertex via a lower $\alpha$-tangle to a ``high'' vertex.
We use two counters, one for each player, that are intertwined and progress in turns.
First the counter of player Even increases, then of player Odd, etc.
The bits are connected to the higher bits such that this tangle is no longer ``attracting'' when the higher bits of the opponent change, letting the opponent escape the tangle, thus the tangle no longer attracts the distraction and the bit is reset.
So when bit $x$ of player $\alpha$ is set, then all bits below $x$ of player $\invalpha$ are reset.
Furthermore, if lower bits \emph{of the opponent} are not set, then the tangle cannot be found because the player is distracted by these distractions.
Bit $x$ of player Odd is distracted by bits $\leq\!\!x$ of player Even, while bit $x$ of player Even is distracted by bits $<\!\!x$ of player Odd.
The result is the following example progression:
\begin{center}
	\small
	\begin{tabu} to 0.45\linewidth {X[l]X[l]X[6l]}
		\textbf{Even} & \textbf{Odd} & \textbf{Event} \\
		\textcolor{blue}{000} & \textcolor{blue}{000} & Initial state \\
		{001} & {000} & Set Even 3 \\
		\textcolor{blue}{001} & \textcolor{blue}{001} & Set Odd 3 \\
		{011} & {000} & Set Even 2 (reset Odd 3) \\
		\textcolor{blue}{010} & \textcolor{blue}{010} & Set Odd 2 (reset Even 3) \\
		{011} & {010} & Set Even 3 \\
		\textcolor{blue}{011} & \textcolor{blue}{011} & Set Odd 3 \\
		{111} & {000} & Set Even 1 (reset Odd 2, 3) \\
		\textcolor{blue}{100} & \textcolor{blue}{100} & Set Odd 1 (reset Even 2, 3) \\
		{101} & {100} & Set Even 3\\
		\textcolor{blue}{101} & \textcolor{blue}{101} & Set Odd 3 \\
		{111} & {100} & Set Even 2 (reset Odd 3) \\
		\textcolor{blue}{110} & \textcolor{blue}{110} & Set Odd 2 (reset Even 3) \\
		{111} & {110} & Set Even 3\\
		\textcolor{blue}{111} & \textcolor{blue}{111} & Set Odd 3 \\
	\end{tabu}
\end{center}
Notice that the state of the counters after every Odd bit, colored blue, tracks the progression of the counters,
while after every Even bit the counters are in an intermediate state.

\begin{figure}[tb]
	\centering
	\scalebox{0.9}{
		\begin{tikzpicture}		
		\tikzset{every edge/.append style={>=stealth,->,solid,thick,draw,text height=0.5ex,text depth=0.2ex}}
		\tikzset{every node/.append style={minimum size=6mm,draw,fill=black!10,font=\scriptsize}}
		\tikzset{my label/.style args={#1:#2}{
				append after command={
					($(\tikzlastnode.center)$) coordinate [label={[label distance=5mm,black]#1:\textbf{\strut #2}}]
		}}}
		\tikzstyle{even}=[diamond,minimum size=6mm,draw]
		\tikzstyle{odd}=[rectangle,minimum size=6mm,draw]
		\tikzstyle{seven}=[even,fill=white,inner sep=0.5mm]
		\tikzstyle{sodd}=[odd,fill=white, inner sep=0.5mm]
		\tikzstyle{sany}=[circle,fill=white, inner sep=0.5mm]
		
		\draw[thin,dashed,pattern=crosshatch dots,pattern color=blue!10] plot coordinates {(8.3,-0.4) (-2.2,-0.4) (-2.2,2.7) (8.3,2.7) (8.3,-0.4)};
		
		\draw (0,0.3)     node[ sodd, my label={above:}]  (i) {$\textbf{l}_\alpha(2)$};
		\draw (0.0,1.5)   node[ sodd, my label={above:}]  (l) {$\textbf{t}$};
		\draw (-1.5,1.5)  node[ sodd, my label={above:}]  (h) {$\textbf{h}$};
		
		\draw (1.5,1.5)   node[seven, my label={above:}] (s0) {};
		\draw (2.8,2.0)   node[ sodd, my label={above:}] (a0) {$\textbf{a}(0)$};
		\draw (3.2,1.0)   node[ sodd, my label={above:}] (b0) {$\textbf{b}(0)$};
		
		\draw (4.5,1.5)   node[seven, my label={above:}] (s1) {};
		\draw (5.8,2.0)   node[ sodd, my label={above:}] (a1) {$\textbf{a}(1)$};
		\draw (6.2,1.0)   node[ sodd, my label={above:}] (b1) {$\textbf{b}(1)$};
		
		\draw (7.5,1.5)   node[seven, my label={above:}] (s2) {$\textbf{z}$};
		
		\draw (i) edge (l);
		\draw (l) edge (h);
		\draw (l) edge (s0);
		\draw (s0) edge (a0);
		\draw (s0) edge (b0);
		\draw (a0) edge (s1);
		\draw (b0) edge (s1);
		\draw (s1) edge (a1);
		\draw (s1) edge (b1);
		\draw (a1) edge (s2);
		\draw (b1) edge (s2);
		\draw (s2) -- (7.5,0.3) -- (1,0.3) edge (l);
		
		\draw (9.2,2.0) node[draw=none,fill=none] (z3) {$\textbf{l}_\invalpha(2)$};
		\draw (9.2,1.5) node[draw=none,fill=none] (z4) {$\textbf{l}_\invalpha(3)$};
		\draw (9.2,1.0) node[draw=none,fill=none] (z5) {$\textbf{l}_\invalpha(4)$};
		\draw (s2) edge[dashed] (z3) edge[dashed] (z4) edge[dashed] (z5);
		
		\draw (2.5,3.5) node[draw=none,fill=none] (z00) {$\textbf{l}_\alpha(0)$};
		\draw (3.5,3.5) node[draw=none,fill=none] (z01) {$\textbf{l}_\invalpha(0)$};
		\draw (a0) edge[dashed] (z00);
		\draw (b0) edge[dashed] (z01);
		
		\draw (5.5,3.5) node[draw=none,fill=none] (z10) {$\textbf{l}_\alpha(1)$};
		\draw (6.5,3.5) node[draw=none,fill=none] (z11) {$\textbf{l}_\invalpha(1)$};
		\draw (a1) edge[dashed] (z10);
		\draw (b1) edge[dashed] (z11);
		
		\draw ($(h)+(-1.6,0)$) node[draw=none,fill=none] (q1) {$\textbf{l}_\alpha(1)$};
		\draw (h) edge[dashed] (q1);
		
		\draw ($(i)+(0,-1.1)$) edge[dashed] (i);
		\draw ($(i)+(-0.25,-1.05)$) edge[dashed] (i);
		\draw ($(i)+(+0.25,-1.05)$) edge[dashed] (i);
		\draw ($(i)+(-0.50,-0.95)$) edge[dashed] (i);
		\draw ($(i)+(+0.50,-0.95)$) edge[dashed] (i);
		\end{tikzpicture}
	}
	\caption{Bit $2$ of a 5-bit Two Counters game.
		We use diamonds for vertices of player $\alpha$ and boxes for vertices of player $\invalpha$.
	}
	\label{fig:abita}
\end{figure}

\begin{table}[tbp]
	\centering
	\begin{tabu} to 0.6\linewidth {X[3]X[3]XXX[8c]}
		\toprule
		\textbf{Player} & \textbf{Bit} & \textbf{l} & \textbf{h} & \textbf{Edge} $\textbf{z}\to\textbf{i}_\invalpha(b)$? \\
		\midrule
		Odd  & bit 0 & 8 & 15 & yes \\
		Even & bit 0 & 7 & 14 & no \\
		Odd  & bit 1 & 6 & 13 & yes \\
		Even & bit 1 & 5 & 12 & no \\
		Odd  & bit 2 & 4 & 11 & yes \\
		Even & bit 2 & 3 & 10 & no \\
		\bottomrule
	\end{tabu}
	\caption{Instantiation of a Two Counters game with $N=3$, where player Even starts.}
	\label{tbl:three}
	\vspace{-1em}
\end{table}

See Fig.~\ref{fig:abita} for bit $2$ of some player $\alpha$ in a $5$-bit counter.
The highest bit is bit $0$ and the lowest bit is bit $4$.
Vertex $\textbf{l}$ is the ``low'' vertex and has a high priority of player $\invalpha$'s parity.
This vertex is a distraction for the opponent if the bit is not set.
Vertex $\textbf{h}$ is the ``high'' vertex and has a high priority of player $\alpha$'s parity.
Vertices \textbf{h} and \textbf{l} of higher bits have higher priorities.
Vertex \textbf{h} has an edge to the input of the next higher bit, except vertex \textbf{h} of the highest bit has an edge to the input of the lowest bit.
When all bits are set, the algorithms can find the winning dominion by connecting all bits.
Vertex $\textbf{t}$ is the ``tangle'' vertex and has priority $2$ for even bits and priority $1$ for odd bits.
All other vertices have the priority $\pr(\textbf{t})-1$.



To find a tangle, player $\alpha$ must force their opponent to play from \textbf{t} to \textbf{z} such that the opponent can only escape to \textbf{h} or to higher regions of player $\alpha$.
Each tangle in the bit uses a different path from \textbf{t} to \textbf{z}, depending on the state of the higher bits.
As the two counters progress together, either $\textbf{l}_\alpha(0)$ is good for $\alpha$ and $\textbf{l}_\invalpha(0)$ is bad for $\alpha$, or vice versa.
In Fig.~\ref{fig:abita}, if bit $0$ is set and bit $1$ is unset, then the path via $\textbf{a}(0)$ and $\textbf{b}(1)$ forces the opponent to stay in the tangle, whereas they could play to their own region via $\textbf{b}(0)$ and $\textbf{a}(1)$.
Furthermore, player $\alpha$ must choose to play from \textbf{z} to \textbf{t}.
When lower $\invalpha$-bits are not set, the $\textbf{l}_\invalpha$ vertices are in a region of player $\alpha$ higher than \textbf{t} and are thus more attractive to play to, preventing player $\alpha$ from learning the tangle.
Finally, we have an asymmetry, since one player starts counting first having the lowest priority \textbf{l} and \textbf{h} vertices.
We only have the edge from vertex \textbf{z} to the matching vertex $\textbf{l}_\invalpha$ for the bits of the second player.
This forces the second player to wait until the first player sets the matching bit.

See Table~\ref{tbl:three} for an example with 3 bits,
and Fig.~\ref{fig:3bitgame} for the full 3-bit Two Counters game.
Since player Even has the lower priorities, player Even first sets bit $2$, followed by player Odd.
When all bits are set, both players have a dominion consisting of their entire counter via the blue edges.

The number of vertices for a Two Counters game with $N$ bits for both players is $3N^2+5N$ and the number of edges is $7N^2+4N$.


\newcommand{\abit}[7]{
	\def\pl{#2}
	\def\n{#3}
	\ifthenelse{#7 = 1 \or #7 = 3 \or #7 = 5 \or #7 = 7}{\def\flagone{1}}{\def\flagone{0}}
	\ifthenelse{#7 = 2 \or #7 = 3 \or #7 = 6 \or #7 = 7}{\def\flagtwo{1}}{\def\flagtwo{0}}
	\ifthenelse{#7 = 4 \or #7 = 5 \or #7 = 6 \or #7 = 7}{\def\flagthree{1}}{\def\flagthree{0}}
	

	\ifthenelse{\pl = 0}{\def\spl{seven}\def\splb{sodd}}{\def\spl{sodd}\def\splb{seven}}
	\ifthenelse{\pl = 0}{\def\coef{1.0}}{\def\coef{-1.0}}
	\pgfmathtruncatemacro{\plb}{1-\pl}

	\coordinate (boxSW) at ($(#1) + (\coef*-0.7,-0.2)$);
	\coordinate (boxNW) at ($(#1) + (\coef*-0.7,2.5)$);
	\coordinate (boxNE) at ($(#1) + (\coef*3.1+\coef*\n*3.0,2.5)$);
	\coordinate (boxSE) at ($(#1) + (\coef*3.1+\coef*\n*3.0,-0.2)$);
	
	\coordinate (outerNE) at (boxNE);
	\coordinate (outerNW) at (boxNW);
	\coordinate (outerSE) at (boxSE);
	\coordinate (outerSW) at (boxSW);
	

	\draw[thin,color=gray!40,pattern=crosshatch dots,pattern color=blue!15] (boxSW) rectangle (boxNE);
	\def\lowpr{#5}
	\pgfmathtruncatemacro{\LLL}{\lowpr-1}
	\draw ($(#1) + (0,1.5)$)       node[\splb] (h) {#6};
	\draw ($(h)  + (\coef*1.2,0)$) node[\splb] (l) {#5};
	\ifthenelse{\flagone = 1}
    {\draw ($(l)  + (0,-1.1)$)      node[\splb] (i) {#4};}
	{\draw ($(l)  + (0,-1.1)$)      node[\splb] (i) {#4};}
	\draw (i) edge (l);
	\draw (l) edge (h);
	
	\ifthenelse{\n = 0}{
		\draw ($(l) + (\coef*1.3,0)$) node[\spl] (s0) {\LLL};
		\draw (l) edge[bend left=30] (s0);
		\draw (s0) edge[bend left=30] (l);
	}{
		\pgfmathtruncatemacro{\np}{\n-1}
		\foreach \i in {0,...,\np} {
			\ifthenelse{\i = 0}
			{
			\draw ($(l)  + (\coef*1.3,0)$)      node[\spl] (s\i) {\LLL};
			}{
			\pgfmathtruncatemacro{\ip}{\i-1}
			\draw ($(s\ip) + (\coef*3,0)$)      node[\spl] (s\i) {\LLL};
			}
			\draw ($(s\i) + (\coef*1.0,0.4)$)   node[\splb] (a\i) {\LLL};
			\draw ($(s\i) + (\coef*1.8,-0.4)$)  node[\splb] (b\i) {\LLL};
		}
		\draw ($(s\np) + (\coef*3,0)$)   node[\spl] (s\n) {\LLL};
		\foreach \i in {0,...,\np} {
			\pgfmathtruncatemacro{\in}{\i+1}
			\draw (s\i) edge (a\i);
			\draw (s\i) edge (b\i);
			\draw (a\i) edge (s\in);
			\draw (b\i) edge (s\in);
			\coordinate (z\pl-\n-\pl-\i) at ($(outerNE)!(a\i)!(outerNW)$);
			\coordinate (z\pl-\n-\plb-\i) at ($(outerNE)!(b\i)!(outerNW)$);
			\draw (a\i) edge[externalb] (z\pl-\n-\pl-\i);
			\draw (b\i) edge[externalb] (z\pl-\n-\plb-\i);
		}
		\draw (l) edge (s0);
		\draw [thick] (s\n) -- ($(s\n) + (0,-0.85)$) -- ($(l) + (\coef*1.3,-0.85)$) edge (l);
	}

	\coordinate (z\pl-\n) at ($(outerNE)!(s\n)!(outerSE)$);
	\ifthenelse{\flagtwo = 1}{}{\draw (s\n) edge[externalc] (z\pl-\n);}
	\coordinate (iy\pl-\n) at ($(outerNE)!(i)!(outerSE)$);
	\ifthenelse{\flagthree = 1}{}{\draw (iy\pl-\n) edge[externalc] (i);}
	\coordinate (h\pl-\n) at ($(outerNW)!(h)!(outerSW)$);
	\draw (h) edge[externala] (h\pl-\n);
	\coordinate (ix\pl-\n) at ($(outerNW)!(i)!(outerSW)$);
	\draw (ix\pl-\n) edge[externala] (i);
	\coordinate (i\pl-\n) at ($(outerSE)!(i)!(outerSW)$);
	\ifthenelse{\flagone = 1}{}{\draw (i\pl-\n) edge[externalb] (i);}
}

\begin{figure}[tbp]
	\centering
	\resizebox{0.9\linewidth}{!}{
		\begin{tikzpicture}[rotate=90]
		\tikzset{every edge/.append style={>=stealth,->,solid,thick,draw,text height=0.5ex,text depth=0.2ex}}
		\tikzset{every node/.append style={minimum size=7mm,minimum width=7mm,inner sep=0mm,draw,fill=white,font=\scriptsize}}
		\tikzstyle{seven}=[diamond]
		\tikzstyle{sodd}=[regular polygon, regular polygon sides=4]
		\tikzstyle{externala}=[densely dashdotted]
		\tikzstyle{externalb}=[dotted]
		\tikzstyle{externalc}=[]
		
		\abit{20,12} {1}{0}{8}{1}{15}{4}
		\abit{0,12}  {0}{0}{7}{2}{14}{0}
		\abit{20,6}  {1}{1}{6}{1}{13}{0}
		\abit{0,6}   {0}{1}{5}{2}{12}{0}
		\abit{20,0}  {1}{2}{4}{1}{11}{1}
		\abit{0,0}   {0}{2}{3}{2}{10}{3}
		
		\draw [blue,densely dashdotted] (h0-2) to ++(-0.3,0) to ++(0,4.9) to (ix0-1);
		\draw [blue,densely dashdotted] (h0-1) to ++(-0.3,0) to ++(0,4.9) to (ix0-0);
		\draw [blue,densely dashdotted] (h0-0) to (-1.2,13.5) to (-1.2,0.4) to (ix0-2);
		\draw [blue,densely dashdotted] (h1-2) to ++(0.3,0) to ++(0,4.9) to (ix1-1);
		\draw [blue,densely dashdotted] (h1-1) to ++(0.3,0) to ++(0,4.9) to (ix1-0);
		\draw [blue,densely dashdotted] (h1-0) to (21.2,13.5) to (21.2,0.4) to (ix1-2);
		
		\draw [blue,dotted] (z0-1-0-0) to (i0-0);
		\draw [blue,dotted] (z0-2-0-0) to [curve through={(5.5,5) (7,6.0) (7,8) (5,9.8)}] (i0-0);
		\draw [blue,dotted] (z0-2-0-1) to (i0-1);
		\draw [red,dotted] (z0-1-1-0) to (i1-0);
		\draw [red,dotted] (z0-2-1-0) to (i1-0);
		\draw [red,dotted] (z0-2-1-1) to (i1-1);
		
		\draw [blue,dotted] (z1-1-1-0) to (i1-0);
		\draw [blue,dotted] (z1-2-1-0) to [curve through={(14.5,5) (13,6.0) (13,8) (15,9.8)}] (i1-0);
		\draw [blue,dotted] (z1-2-1-1) to (i1-1);
		\draw [red,dotted] (z1-1-0-0) to (i0-0);
		\draw [red,dotted] (z1-2-0-0) to (i0-0);
		\draw [red,dotted] (z1-2-0-1) to (i0-1);
		
		\draw [red] (z0-0) to [curve through={(10,9.3) (11.2,7.7)}] (iy1-1);
		\draw [red] (z0-0) to [curve through={(8,9) (9,5.5)}] (iy1-2);
		\draw [red] (z0-1) to [curve through={(8.5,5) (9,3.5)}] (iy1-2);
		
		\draw [red] (z1-0) to [curve through={(10,9.3) (8.8,7.7)}] (iy0-1);
		\draw [red] (z1-0) to [curve through={(12,9) (11,5.5)}] (iy0-2);
		\draw [red] (z1-1) to [curve through={(11.5,5) (11,3.5)}] (iy0-2);
		\draw [red] (z1-0) to (iy0-0);
		\draw [red] (z1-1) to (iy0-1);
		\draw [red] (z1-2) to (iy0-2);
		\end{tikzpicture}
	}
	\caption{%
		The 3-bit Two Counters game. 
	}
	\label{fig:3bitgame}
\end{figure}
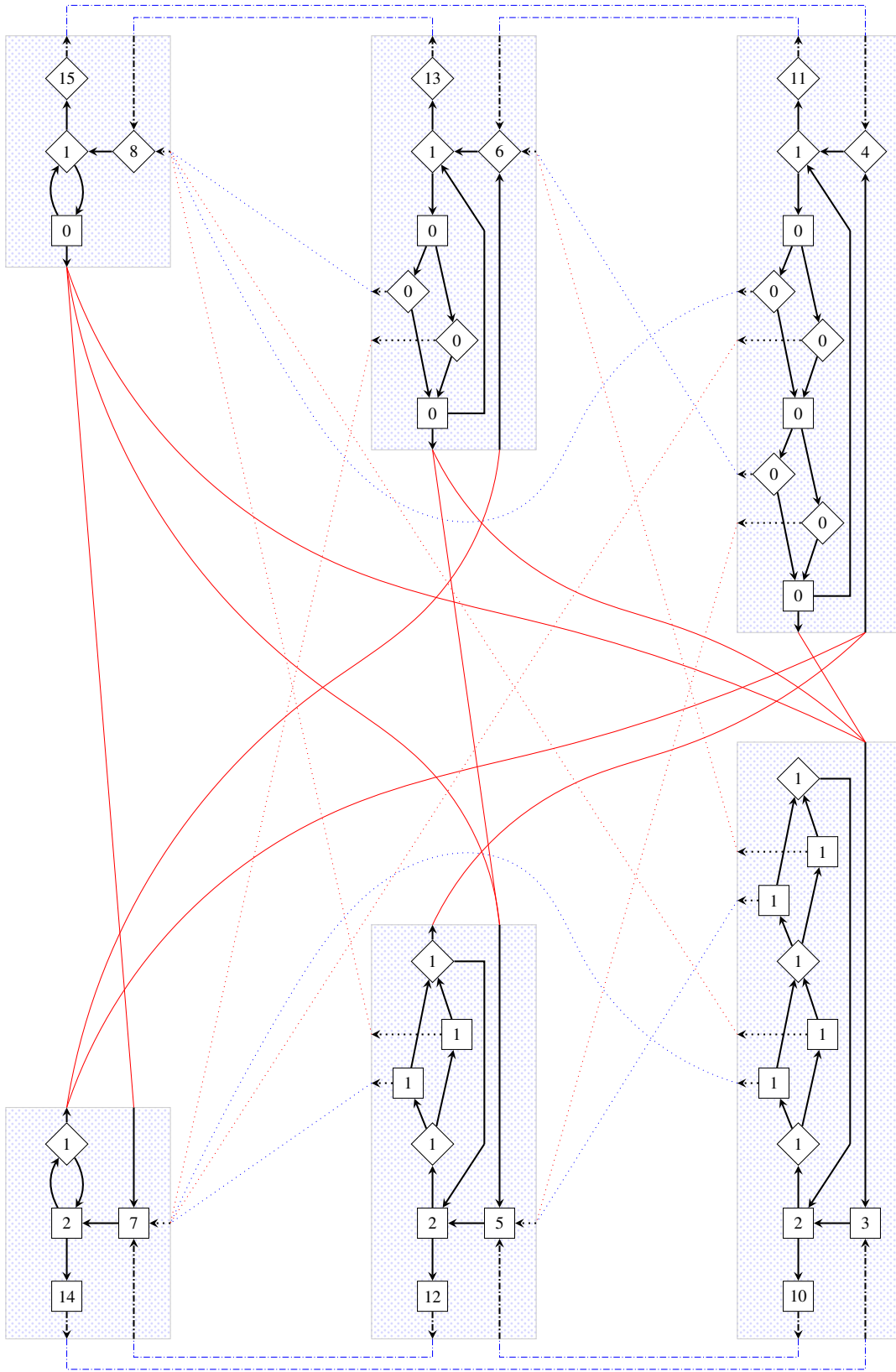



\section{Empirical evaluation}
\label{sec:data}

To assess the runtime complexity of solving $N$-bit Two Counters games with different algorithms,
we solve them with these algorithms and obtain a relevant statistic that is indicative of the runtime.
In this section, we support the claim that the games are exponential for these solvers based on this statistic.
Although this is a weaker basis than a machine-checkable complexity proof, such a proof is more difficult to produce and to understand, while the exponential behavior is clear from the empirical data.
However, in Section~\ref{sec:butwhy},
we do take a closer look at how these algorithms solve the Two Counters games,
and we sketch out an inductive proof that the algorithms require exponentially many steps for higher $N$.

To solve the games we use the implementation of the algorithms in the parity game solver \textsc{Oink}~\cite{DBLP:conf/tacas/Dijk18}, but we have also confirmed the results with implementations in \textsc{PGSolver}~\cite{DBLP:conf/atva/FriedmannL09}.
We use the following command line to obtain the results:

{\begin{center}
	\begin{minipage}{0.75\linewidth}
		\noindent
		\verb@for s in zlk pp dp rr rrdp tl atl; do@ \\
		\verb@    for i in 1 2 3 4 5 6 7 8 9 10 15 20; do@ \\
		\verb@        ./tc $i | ./oink --$s | grep "solved with"; done; done@
	\end{minipage}
\end{center}}

\begin{table}[tbp]
\begin{tabu} to \linewidth {X[0.15c]X[0.55r]|X[1r]X[r]X[r]X[1r]X[r]}
	\toprule
	\multicolumn{2}{l|}{\hfill\textbf{\#vertices}} & \textbf{ZLK} & \textbf{PP} & \textbf{DP} & \textbf{RR, RRDP} & \textbf{TL, ATL} \\
	\multicolumn{2}{l|}{\textbf{bits}\hfill} & calls & promotions & promotions & promotions & tangles \\
	\midrule
	1 & 8      & 8         & 2         & 2         & 2         & 2         \\
	2 & 22     & 21        & 9         & 7         & 6         & 6         \\
	3 & 42     & 45        & 23        & 18        & 14        & 14        \\
	4 & 68     & 91        & 52        & 43        & 30        & 30        \\
	5 & 100    & 181       & 112       & 97        & 62        & 62        \\
	6 & 138    & 359       & 235       & 210       & 126       & 126       \\
	7 & 182    & 713       & 485       & 442       & 254       & 254       \\
	8 & 232    & 1,419     & 990       & 913       & 510       & 510       \\
	9 & 288    & 2,829     & 2,006     & 1,863     & 1,022     & 1,022     \\
	10 & 350   & 5,647     & 4,045     & 3,772     & 2,046     & 2,046     \\
	15 & 750   & 180,249   & 130,961   & 122,742   & 65,534    & 65,534    \\
	20 & 1,300 & 5,767,203 & 4,194,108 & 3,931,927 & 2,097,150 & 2,097,150 \\
	\bottomrule
\end{tabu}
\vspace{-1em}
\caption{Solving the Two Counters games with different algorithms. We report the relevant statistic that represents the runtime for each algorithm.}
\label{tbl:difficultystatistics}
\end{table}

\noindent
In Table~\ref{tbl:difficultystatistics},
we report the following statistic for each solver:
\begin{itemize}[nosep]
	\item the number of recursive calls for Zielonka's recursive algorithm (ZLK)
	\item the number of promotions for priority promotion (PP, DP, RR, RRDP)
	\item the number of tangles for (alternating) tangle learning (TL, ATL)
\end{itemize}
Table~\ref{tbl:difficultystatistics} shows that the relevant statistic doubles with each higher number of bits, which supports that the runtime is exponential in the number of bits.
The number of non-dominion tangles for ATL and TL is the same as the number of promotions for RR and RRDP,
namely
$2\times(2^N-1)$ tangles
to set all bits.
Since the number of vertices and edges increases quadratically,
we conclude that the Two Counters games provide an exponential lower bound to these algorithms, namely $\Omega(2^{\sqrt{n}})$.
We also investigated whether inflation and compression~\cite{DBLP:conf/tacas/Dijk18} had any effect and this was not the case.

In Section~\ref{sec:butwhy},
we take a closer look at how these algorithms solve the Two Counters games,
and sketch an inductive proof that the algorithms require exponentially many steps for higher $N$.

\section{Analysis}
\label{sec:butwhy}

We study how the three algorithms solve a Two Counters game by looking at the timeline of events obtained from the trace output of the solver.
We then \emph{sketch a proof} arguing that any $N$-bit Two Counters game requires $2^N$ many steps for the three algorithms. 




\subsection{Recursive algorithm}

As argued in Section~\ref{sec:distractions},
to make the recursive algorithm take exponential time,
we need distractions such that after a ``higher'' distraction is attracted, all ``lower'' distractions must be attracted again.
First, we use the trace output of \textsc{Oink} to obtain a timeline of identified distractions using the command line \verb+./tc 5 | ./oink --zlk -t -t | grep "distraction" | grep "Odd"+:
\begin{center}
4, 3, 4, 2, 4, 3, 4, 1, 4, 3, 4, 2, 4, 3, 4, 0, 4, 3, 4, 2, 4, 3, 4, 1, 4, 3, 4, 2, 4, 3, 4
\end{center}
We can also draw this timeline as a tree, which is an abstraction of the recursive call graph:
\begin{center}\small
	\begin{tikzpicture}[
	distree/.style = {align=center, inner sep=2pt, text centered, draw=none, minimum width=0.5em},
	level 1/.style={sibling distance=8cm},
	level 2/.style={sibling distance=3.9cm}, 
	level 3/.style={sibling distance=1.9cm}, 
	level 4/.style={sibling distance=0.75cm},
	level/.style={level distance = 0.6cm},
	every node/.style={distree}] 
	\node {$\textbf{l}_1(0)$}
		child{ node {$\textbf{l}_1(1)$} 
			child{ node {$\textbf{l}_1(2)$} 
				child{ node {$\textbf{l}_1(3)$} 
					child{ node {$\textbf{l}_1(4)$} }
					child{ node {$\textbf{l}_1(4)$} } }
				child{ node {$\textbf{l}_1(3)$} 
					child{ node {$\textbf{l}_1(4)$} }
					child{ node {$\textbf{l}_1(4)$} } } }
			child{ node {$\textbf{l}_1(2)$} 
				child{ node {$\textbf{l}_1(3)$} 
					child{ node {$\textbf{l}_1(4)$} }
					child{ node {$\textbf{l}_1(4)$} } }
				child{ node {$\textbf{l}_1(3)$} 
					child{ node {$\textbf{l}_1(4)$} }
					child{ node {$\textbf{l}_1(4)$} } } } }
		child{ node {$\textbf{l}_1(1)$} 
			child{ node {$\textbf{l}_1(2)$} 
				child{ node {$\textbf{l}_1(3)$} 
					child{ node {$\textbf{l}_1(4)$} }
					child{ node {$\textbf{l}_1(4)$} } }
				child{ node {$\textbf{l}_1(3)$} 
					child{ node {$\textbf{l}_1(4)$} }
					child{ node {$\textbf{l}_1(4)$} } } }
			child{ node {$\textbf{l}_1(2)$} 
				child{ node {$\textbf{l}_1(3)$} 
					child{ node {$\textbf{l}_1(4)$} }
					child{ node {$\textbf{l}_1(4)$} } }
				child{ node {$\textbf{l}_1(3)$} 
					child{ node {$\textbf{l}_1(4)$} }
					child{ node {$\textbf{l}_1(4)$} } } } } ;
	\end{tikzpicture}
\end{center}
From the above, it is at least clear that the algorithm
behaves as expected for the $5$-bit Two Counters game.
We now sketch a proof that the recursive algorithm requires exponential time for all Two Counters games.
\begin{lemma}
\label{lemma:rec}
The recursive algorithm follows the progression of a binary counter when solving a Two Counters game.
\end{lemma}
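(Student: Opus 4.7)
The plan is to proceed by induction on the number of bits. Concretely, I would parameterise on $m$, the number of counter bits remaining to be considered, and prove that when the recursive algorithm enters the subgame consisting of bits $N-m$ through $N-1$ (of both players) with all such bits ``unset'', it identifies distractions in precisely the sequence produced by a binary counter counting from $0$ to $2^m - 1$. Viewing the recursion tree as in the tree diagram preceding the lemma, each leaf corresponds to a recursive call that pins down one distraction, yielding $2^m - 1$ leaves for $m$ remaining bits.

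For the \emph{base case} $m=1$, only the lowest bit remains, so the highest priority vertex in the current subgame is its $\textbf{h}$ vertex. The algorithm attracts $\textbf{h}$, recurses on an essentially empty remainder, detects $\textbf{l}$ as a distraction via the second recursive branch, and terminates after one such identification, matching the single ``set bit $N-1$'' step of the counter. For the \emph{inductive step}, the algorithm first computes the attractor $A$ to the highest priority vertex in the current subgame, which is $\textbf{h}$ of the lowest unset bit, say bit $k = N-m$. Because no lower bits of the opponent are set, the distraction $\textbf{l}(k)$ is not yet pulled into $A$. The recursive call on the remainder then operates on the $m-1$ higher bits (of both counters), and by the induction hypothesis performs $2^{m-1}-1$ distraction identifications following the binary counter pattern on those bits. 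After this recursion returns, the tangle structure in the lower bits is exactly such that the opponent can attract $\textbf{l}(k)$ via the tangle through the $\textbf{a},\textbf{b}$ vertices selected by the current state of the higher bits, as set up in Section~\ref{sec:twobc}. The algorithm therefore enters its second recursive branch with $B \supsetneq W_\invalpha$, and this call must once again identify $2^{m-1}-1$ distractions on the lower bits, producing the total $2\cdot(2^{m-1}-1) + 1 = 2^m - 1$ that matches the binary-counter progression.

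The principal obstacle is the \emph{reset} step in the inductive analysis: one must argue that removing $\textbf{l}(k)$ together with everything attracted to it on the second recursive call restores the lower bits to a configuration equivalent to their initial ``unset'' state, so that the second recursive call genuinely revisits all $2^{m-1}-1$ distractions rather than fewer. This entails tracing how the paths through $\textbf{a}(i),\textbf{b}(i)$ are selected based on which $\textbf{l}_\alpha$ and $\textbf{l}_\invalpha$ vertices remain in play, mirroring the ``set bit $x$ resets bits of the opponent below $x$'' behaviour listed in the progression table of Section~\ref{sec:twobc}. The small additional subtlety is checking that the asymmetry between the two counters (only one player has the $\textbf{z} \to \textbf{l}_\invalpha$ edges for the second-counting player) does not disturb this alternation. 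Once this structural reset is verified, the identified sequence of distractions is forced to coincide with the binary counter pattern, and the lemma follows.
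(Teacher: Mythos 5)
Your induction has the right overall shape---the recursion tree with a distraction identification at each node and a full lower-counter progression in each of the two subtrees is exactly the structure the paper exhibits---but it leaves the load-bearing step unproven. You name it yourself as ``the principal obstacle'': showing that the second recursive call at the level of $\textbf{l}(k)$ really does restore all lower bits to an unset configuration, so that the full $2^{m-1}-1$ identifications are repeated rather than some smaller number. Acknowledging the obstacle is not the same as clearing it, and without it the induction does not close. The paper's proof supplies precisely this missing argument: after an \emph{even} distraction is attracted, player Even attracts the entire remaining subgame except the tangle of the corresponding Odd bit (because the even $\textbf{l}$ is now good-for-Even while the odd $\textbf{l}$ is not yet good-for-Odd), so that when the corresponding \emph{odd} distraction is then attracted, the set $B$ of Algorithm~\ref{alg:zielonka} consists of exactly the odd tangle, the attracted distraction, and the directly connected lower vertices---hence every lower distraction lands back in the recomputed subgame and is distracting again. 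Note also that the paper tracks only the \emph{odd} bits and exploits the Even/Odd interleaving explicitly; your proposal treats ``bits $N-m$ through $N-1$ of both players'' as a single unit and defers the asymmetry to a ``small additional subtlety,'' which is where the reset argument actually lives.

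A second, more structural problem: your induction is phrased as entering ``the subgame consisting of bits $N-m$ through $N-1$,'' but no such subgame arises in the recursion. The algorithm peels off one priority at a time, and in the construction \emph{all} $\textbf{h}$ vertices (priorities $10$--$15$ in the $3$-bit instance) lie above \emph{all} $\textbf{l}$ vertices (priorities $3$--$8$), so every $\textbf{h}$ of every bit of both players is attracted away before the first $\textbf{l}$ is even the head of a region. Consequently your base case (``the highest priority vertex in the current subgame is its $\textbf{h}$ vertex'') and your inductive step (``$A$ is the attractor to $\textbf{h}$ of bit $k$, and the remainder is the other $m-1$ bits'') do not describe the actual call structure. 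The correct induction must be on the nesting of the $\textbf{l}$-regions (equivalently, on the recursion tree of identified distractions, as in the paper's tree diagram), with the $\textbf{h}$ layers and the tangle gadgets handled as part of a single decomposition rather than bit by bit. The paper avoids this pitfall by arguing at the level of three counter invariants (all bits start unset; the strict recursive order forces lower distractions to be identified before higher ones; setting a higher bit resets the lower ones) instead of identifying subgames with blocks of bits.
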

\begin{proof}
	We follow the progression of the \emph{odd} bits.
	In the initial recursive decomposition, each \textbf{h} vertex attracts no vertices and each \textbf{l} vertex attracts the connected vertices of lower bits and distracts connected \textbf{z} vertices.
	Since no \textbf{l} vertex is itself attracted, all distractions are unattracted and thus are the bits unset.
	
	Because of its strict recursive structure, the recursive algorithm cannot attract and identify a higher distraction before all lower distractions.

	After removing an \emph{even} distraction, Even attracts the entire subgame except the tangle of the corresponding bit of Odd. The reason for this is that the even bit \textbf{l} is now good-for-Even, while odd bit \textbf{l} is not yet good-for-Odd, so all lower bits are trivially attracted to the even region.
	For example, in Fig.~\ref{fig:3bitgame}, after attracting the distraction with priority $7$ (in bit $0$ of Even), vertices $7$ and $8$ are good for Even, and all lower vertices are now attracted to the even region, except the tangle of the odd bit.
	Hence, after removing the corresponding \emph{odd} distraction, the set $B$ (line~6 of Alg.~\ref{alg:zielonka}) consists exactly of the odd tangle, the attracted distraction and the directly connected lower vertices.
	That is, all the lower distractions are in the subgame that is recomputed and are distracting again.
	Thus, setting higher bits resets the lower bits.
	
	We have established that the counter starts with all bits unset, that setting higher bits resets all lower bits, and that higher bits are not set before the lower bits are set.
	Therefore the recursive algorithm follows the progression of a binary counter.
\end{proof}
Notice that in the above proof, because of the strict recursive nature,
we do not need to distract \textbf{z} to make the recursive algorithm wait until all lower bits are set before setting a higher bit.
%
In fact, removing the edges from \textbf{z} vertices to lower distractions results in even more recursive calls.
Priority promotion and tangle learning 
are not bound to a strict order like the recursive algorithm.

\subsection{Priority promotion}

In priority promotion, a bit is set when the region of tangle vertex \textbf{t} is closed,
promotes to a higher region and attracts the distraction \textbf{l}.
When solving the 3-bit TC game with the delayed promotion policy, the command line \verb+./tc 3 | ./oink --dp -t -t | egrep "promoted|delayed"+ gives us the following sequence of promotions:

\smallskip
\noindent
\begin{tabu} to \textwidth {X[0.3]X[1.2]X[1.2]X[1.1c]X[1c]X[1.5c]X[1.5c]}
\toprule
& \textbf{Promotion} & \textbf{Bit} & \multicolumn{2}{c}{\textbf{Counter state (DP)}} & \textbf{Delayed} & \textbf{Recovered} \\
   &          &         & even   & odd    &   {\small (in DP/RRDP)}          &  {\small (in RR/RRDP)}   \\
\midrule
1  & 2 to 6   & \textcolor{blue}{Even 2}  & 001     & 000    &             &     \\
2  & 1 to 5   & \textcolor{blue}{Odd 2}   & 001     & 001    &             &     \\
3  & 2 to 8   & \textcolor{blue}{Even 1}  & 011     & 000    & yes         &     \\
4  & 1 to 7   & \textcolor{blue}{Odd 1}   & 010     & 010    &             &     \\
5  & 2 to 8   & \textcolor{blue}{Even 2}  & 011     & 000    & yes         &     \\
6  & 1 to 7   & Odd 1                     & 011     & 010    &             & yes \\
7  & 1 to 7   & \textcolor{blue}{Odd 2}   & 011     & 011    &             &     \\
8  & 2 to 14  & \textcolor{blue}{Even 0}  & 111     & 000    & yes         &     \\
9  & 1 to 15  & \textcolor{blue}{Odd 0}   & 000     & 100    &             &     \\
10 & 2 to 14  & Even 0                    & 100     & 100    &             & yes \\
11 & 2 to 6   & \textcolor{blue}{Even 2}  & 101     & 100    &             &     \\
12 & 1 to 5   & \textcolor{blue}{Odd 2}   & 101     & 101    &             &     \\
13 & 2 to 14  & \textcolor{blue}{Even 1}  & 111     & 100    & yes         &     \\
14 & 1 to 15  & \textcolor{blue}{Odd 1}   & 000     & 110    &             &     \\
15 & 2 to 14  & Even 0                    & 100     & 110    &             & yes \\
16 & 2 to 14  & Even 1                    & 110     & 110    &             & yes \\
17 & 2 to 14  & \textcolor{blue}{Even 2}  & 111     & 110    &             &     \\
18 & 1 to 15  & \textcolor{blue}{Odd 2}   & 111     & 111    &             &     \\
\bottomrule
\end{tabu}

\smallskip
We record the state of the two counters after each promotion with the DP solver.
The ``recovered'' promotions do not occur in the RR and RRDP algorithms, as their regions are recovered.
All delayed promotions are immediately applied,
because after delaying the promotion, no normal promotion is available.
Thus the delayed promotion policy is not useful here.
Region recovery is however useful, as was already clear from Table~\ref{tbl:difficultystatistics}.
Both PP and DP (based on PP+) perform additional promotions that are not necessary with region recovery.
For the 3-bit TC game, PP requires $23$ promotions, DP requires $18$ promotions, RR requires $14$ promotions and RRDP requires $14$ promotions.

\begin{lemma}
\label{lemma:pp}
Priority promotion follows the progression of a binary counter when solving a Two Counters game.
\end{lemma}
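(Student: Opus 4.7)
The plan is to prove this by induction on promotion steps, mirroring the structure of Lemma~\ref{lemma:rec} but adapted to the mechanics of priority promotion. ``Setting'' a bit in priority promotion means that the region containing the tangle vertex $\textbf{t}$ becomes locally closed and is promoted upward to the priority of $\textbf{l}$, thereby merging $\textbf{l}$ (the distraction) into an $\alpha$-region. I would first characterise the initial $\alpha$-maximal decomposition: every $\textbf{h}$ vertex is a singleton region (it has no attractable predecessors other than $\textbf{l}$), each $\textbf{l}$ vertex forms an $\invalpha$-region that absorbs only the wired lower-bit vertices, and the block $\textbf{t}, \textbf{a}(i), \textbf{b}(i), \textbf{z}$ decomposes into lower-priority regions in which $\textbf{z}$ is attracted out towards some $\textbf{l}_\invalpha$ distraction rather than looping back to $\textbf{t}$. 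Consequently no tangle region is closed initially, so the counter starts with all bits unset, matching the invariant in the table's blue rows.

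Next I would argue monotonicity: the bit at level $k$ of player $\alpha$ cannot close before every lower opponent bit has been set. For the $\textbf{t}$-region to close as an $\alpha$-region, the $\alpha$-player at $\textbf{z}$ must prefer looping back to $\textbf{t}$ over escaping to any outgoing $\textbf{l}_\invalpha$; this holds exactly when every such $\textbf{l}_\invalpha$ has already been absorbed into an $\invalpha$-region (so it is no longer a distraction for $\alpha$), which by construction of the gadget happens only after the corresponding lower opponent bits have been set. Symmetrically, the opponent at $\textbf{t}$ must be forced into the appropriate $\textbf{a}/\textbf{b}$ path leading to $\textbf{z}$, and the selection of that path is determined by the current state of the higher bits, exactly as in Section~\ref{sec:twobc}. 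Hence PP sets bits in increasing significance, as a binary counter does.

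Finally I would show that promoting the $\textbf{t}$-region at level $k$ of player $\alpha$ resets every lower bit of the opponent. The promotion attracts $\textbf{l}_\alpha(k)$ into the $\alpha$-region, which via the dashed-dotted backbone attracts the opponent's higher $\textbf{h}$ vertex and then the chain of lower opponent bits; under PP and PP+ the lower $\invalpha$-regions are explicitly reset, and under RR/RRDP the same effect is obtained because the $\invalpha$-attractor strategies of the lower bits depended on $\textbf{l}_\alpha(k)$ still being in an $\invalpha$-region. Either way, every lower opponent bit returns to its unset configuration in the refined decomposition, matching the ``reset'' entries in the Event column. Combining the three items yields the binary-counter invariant; the main obstacle is that this invariant is only clean at the Odd-bit (blue) states, since the intermediate Even-bit states differ between PP, PP+, DP and RR/RRDP due to whether old promotions are re-done or recovered, so the inductive step must be phrased in terms of pairs of promotions (an $\alpha$-promotion followed by the matching $\invalpha$-promotion) rather than single steps.
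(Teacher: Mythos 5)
Your proposal is correct and follows essentially the same route as the paper's (itself quite terse) proof: both establish that the counters start at zero because no distraction is initially attracted, that a bit's $\textbf{t}$-region cannot close while $\textbf{z}$ still prefers an unabsorbed opponent distraction $\textbf{l}_\invalpha$ (so high bits wait for low bits), and that a promotion resets lower opponent bits under PP/PP+ by policy and under RR/RRDP because the stored attractor strategies now leave their regions. Your version merely spells out the initial decomposition and the pairing of Even/Odd promotions in more detail than the paper does.
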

\begin{proof}
Similar to the proof of Lemma~\ref{lemma:rec}.
Initially no distractions are attracted by their opponent, so the counters start with $0$.
Whenever a higher bit is reset, the PP and PP+ policies reset all lower bits of the other player and the RR algorithm cannot recover because the earlier strategy leaves the remaining subgame.
Due to the interleaving, setting odd bits resets lower even bits and vice versa.
Therefore higher bits reset lower bits.
Notice that the region of a distraction cannot be closed.
Therefore as long as a distraction is not attracted, all vertices \textbf{z} from the higher bits are distracted and cannot be part of a tangle until the distraction is attracted.
Thus high bits wait for lower bits.
From this follows Lemma~\ref{lemma:pp}.
\end{proof}

\subsection{Tangle learning}

We already explained in Section~\ref{sec:twobc} why tangle learning follows the progression of the counters,
as the counters are designed to cause tangle learning to learn all tangles in all bits.
We see that it works as expected via the command line \verb+./tc 5 | ./oink --tl -t -t -t | grep "new tangle"+ which yields precisely the expected tangles:
%
Even 4, Odd 4, Even 3, Odd 3, Even 4, Odd 4, Even 2, Odd 2, Even 4, Odd 4, Even 3, Odd 3, Even 4, Odd 4, 
Even 1, Odd 1,
Even 4, Odd 4, Even 3, Odd 3, Even 4, \dots

\newpage

\subsection{Other algorithms}

We also considered the fixpoint algorithms~\cite{DBLP:journals/corr/BruseFL14,fpi,DBLP:conf/wia/StasioMPV16}, the small progress measures algorithm~\cite{DBLP:conf/stacs/Jurdzinski00}, and the two quasi-polynomial algorithms, ordered progress measures~\cite{DBLP:conf/stoc/CaludeJKL017,DBLP:journals/sttt/FearnleyJKSSW19} and succinct progress measures~\cite{DBLP:conf/lics/JurdzinskiL17}.

The fixpoint algorithms require exponential time as they are essentially myopic versions of the recursive algorithm, based on a one-step attractor instead of full attractors.
The other algorithms are certainly slow and appear to require lifting vertices across the entire available range, but this is not due to the design of the Two Counters games.
Rather, these algorithms tend to be extremely slow even for trivial games like winning self-loops.
The exception is in fact strategy iteration~\cite{DBLP:conf/cav/Fearnley17}, which is the only algorithm that can quickly and in polynomial time solve the Two Counters games.

The Two Counters games are difficult for attractor-based algorithms for the simple reason that player $\alpha$ prefers to play to the distractions of the opponent's counters as these have the highest value for the player.
As soon as the player decides to play from \textbf{z} to $\textbf{t}$, the tangle is learned.
Algorithms using progress measures and the strategy iteration algorithm assign a higher value to vertex \textbf{t} as soon as there is a path to \textbf{t} from \textbf{z}, because vertices with $\alpha$'s priority along the path receive a progressively higher value.
Thus the players are not distracted very long by the distractions.

\section{Discussion}
\label{sec:discussion}

Since solving parity games is known to lie in $\text{UP}\cap\text{co-UP}$, it is
widely believed to admit a polynomial solution.
Researchers have been studying this problem for several decades.
For some time, the focus was on variations of strategy iteration algorithms,
where a suitable \emph{improvement rule} could lead to a polynomial solution.
Friedmann et al. provided superpolynomial or exponential lower bounds for many of such rules~\cite{DBLP:journals/mp/AvisF17,DBLP:conf/lics/Friedmann09,DBLP:journals/lmcs/Friedmann11,DBLP:conf/ipco/Friedmann11,Friedmann2012,DBLP:journals/dam/Friedmann13,DBLP:conf/soda/FriedmannHZ11},
as well as Fearnley and Savani recently~\cite{DBLP:conf/soda/FearnleyS16}.

For a long time, the main attractor-based algorithm was the recursive algorithm by McNaughton~\cite{DBLP:journals/apal/McNaughton93} and Zielonka~\cite{DBLP:journals/tcs/Zielonka98}, for which Friedmann also showed an exponential lower bound~\cite{DBLP:journals/ita/Friedmann11} and which Gazda and Willemse~\cite{DBLP:journals/corr/GazdaW13} later improved upon.
These lower bounds resist techniques like inflation, compression and SCC decomposition. 
As the lower bound of Friedmann can be defeated by memoization, since the number of distinct subgames is polynomial, Benerecetti et al. proposed a more resilient lower bound~\cite{DBLP:conf/gandalf/BenerecettiDM17}. 
Their priority promotion algorithm~\cite{DBLP:conf/cav/BenerecettiDM16} solves \cite{DBLP:conf/gandalf/BenerecettiDM17,DBLP:journals/ita/Friedmann11,DBLP:journals/corr/GazdaW13} in polynomial time, although this requires inflation for~\cite{DBLP:journals/corr/GazdaW13}.
For priority promotion, two exponential lower bounds have been presented~\cite{DBLP:journals/corr/BenerecettiDM16,DBLP:journals/fmsd/BenerecettiDM18}.
However for variations of priority promotion, in particular for DP~\cite{DBLP:conf/gandalf/BenerecettiDM17}, no superpolynomial lower bound has been published.
Tangle learning~\cite{DBLP:conf/cav/Dijk18} solves all these lower bound examples in polynomial time.

We presented the parameterized Two Counters game
and showed that this provides an exponential lower bound for
the recursive algorithm,
for (all variations of) priority promotion, in particular with the
delayed promotion policy,
and for tangle learning.
An $N$-bit Two Counters game has size $n=3N^2+5N$ and requires at least $2\times(2^{N}-1)$ steps, thus providing a lower bound of $\Omega(2^{\sqrt{n}})$.
%
%

The Two Counters game is available online via \url{https://www.github.com/trolando/oink}.

\section*{Acknowledgements}

We thank Veronika Loitzenbauer for the helpful discussions on these and other difficult parity games.
We also thank Marcin Jurdzi\'nski, Ranko Lazi\'c, Laure Daviaud and Laurent Doyen for their hospitality and discussions.
Finally, we thank Armin Biere for generously supporting this research.

\clearpage

\bibliographystyle{eptcs}
\bibliography{lit}

\begin{thebibliography}{10}
\providecommand{\bibitemdeclare}[2]{}
\providecommand{\surnamestart}{}
\providecommand{\surnameend}{}
\providecommand{\urlprefix}{Available at }
\providecommand{\url}[1]{\texttt{#1}}
\providecommand{\href}[2]{\texttt{#2}}
\providecommand{\urlalt}[2]{\href{#1}{#2}}
\providecommand{\doi}[1]{doi:\urlalt{http://dx.doi.org/#1}{#1}}
\providecommand{\bibinfo}[2]{#2}

\bibitemdeclare{article}{DBLP:journals/mp/AvisF17}
\bibitem{DBLP:journals/mp/AvisF17}
\bibinfo{author}{David \surnamestart Avis\surnameend} \&
  \bibinfo{author}{Oliver \surnamestart Friedmann\surnameend}
  (\bibinfo{year}{2017}): \emph{\bibinfo{title}{An exponential lower bound for
  Cunningham's rule}}.
\newblock {\sl \bibinfo{journal}{Math. Program.}}
  \bibinfo{volume}{161}(\bibinfo{number}{1-2}), pp. \bibinfo{pages}{271--305},
  \doi{10.1007/s10107-016-1008-4}.

\bibitemdeclare{inproceedings}{DBLP:journals/corr/BenerecettiDM16}
\bibitem{DBLP:journals/corr/BenerecettiDM16}
\bibinfo{author}{Massimo \surnamestart Benerecetti\surnameend},
  \bibinfo{author}{Daniele \surnamestart Dell'Erba\surnameend} \&
  \bibinfo{author}{Fabio \surnamestart Mogavero\surnameend}
  (\bibinfo{year}{2016}): \emph{\bibinfo{title}{{A Delayed Promotion Policy for
  Parity Games}}}.
\newblock In: {\sl \bibinfo{booktitle}{{GandALF} 2016}}, {\sl
  \bibinfo{series}{{EPTCS}}} \bibinfo{volume}{226}, pp.
  \bibinfo{pages}{30--45}, \doi{10.4204/EPTCS.226.3}.

\bibitemdeclare{inproceedings}{DBLP:conf/hvc/BenerecettiDM16}
\bibitem{DBLP:conf/hvc/BenerecettiDM16}
\bibinfo{author}{Massimo \surnamestart Benerecetti\surnameend},
  \bibinfo{author}{Daniele \surnamestart Dell'Erba\surnameend} \&
  \bibinfo{author}{Fabio \surnamestart Mogavero\surnameend}
  (\bibinfo{year}{2016}): \emph{\bibinfo{title}{{Improving Priority Promotion
  for Parity Games}}}.
\newblock In: {\sl \bibinfo{booktitle}{{HVC} 2016}}, {\sl
  \bibinfo{series}{LNCS}} \bibinfo{volume}{10028}, pp.
  \bibinfo{pages}{117--133}, \doi{10.1007/978-3-319-49052-6\_8}.

\bibitemdeclare{inproceedings}{DBLP:conf/cav/BenerecettiDM16}
\bibitem{DBLP:conf/cav/BenerecettiDM16}
\bibinfo{author}{Massimo \surnamestart Benerecetti\surnameend},
  \bibinfo{author}{Daniele \surnamestart Dell'Erba\surnameend} \&
  \bibinfo{author}{Fabio \surnamestart Mogavero\surnameend}
  (\bibinfo{year}{2016}): \emph{\bibinfo{title}{{Solving Parity Games via
  Priority Promotion}}}.
\newblock In: {\sl \bibinfo{booktitle}{{CAV} 2016}}, {\sl
  \bibinfo{series}{LNCS}} \bibinfo{volume}{9780},
  \bibinfo{publisher}{Springer}, pp. \bibinfo{pages}{270--290},
  \doi{10.1007/978-3-319-41540-6\_15}.

\bibitemdeclare{inproceedings}{DBLP:conf/gandalf/BenerecettiDM17}
\bibitem{DBLP:conf/gandalf/BenerecettiDM17}
\bibinfo{author}{Massimo \surnamestart Benerecetti\surnameend},
  \bibinfo{author}{Daniele \surnamestart Dell'Erba\surnameend} \&
  \bibinfo{author}{Fabio \surnamestart Mogavero\surnameend}
  (\bibinfo{year}{2017}): \emph{\bibinfo{title}{Robust Exponential Worst Cases
  for Divide-et-Impera Algorithms for Parity Games}}.
\newblock In: {\sl \bibinfo{booktitle}{{GandALF}}}, {\sl
  \bibinfo{series}{{EPTCS}}} \bibinfo{volume}{256}, pp.
  \bibinfo{pages}{121--135}, \doi{10.4204/EPTCS.256.9}.

\bibitemdeclare{article}{DBLP:journals/fmsd/BenerecettiDM18}
\bibitem{DBLP:journals/fmsd/BenerecettiDM18}
\bibinfo{author}{Massimo \surnamestart Benerecetti\surnameend},
  \bibinfo{author}{Daniele \surnamestart Dell'Erba\surnameend} \&
  \bibinfo{author}{Fabio \surnamestart Mogavero\surnameend}
  (\bibinfo{year}{2018}): \emph{\bibinfo{title}{Solving parity games via
  priority promotion}}.
\newblock {\sl \bibinfo{journal}{Formal Methods in System Design}}
  \bibinfo{volume}{52}(\bibinfo{number}{2}), pp. \bibinfo{pages}{193--226},
  \doi{10.1007/s10703-018-0315-1}.

\bibitemdeclare{inproceedings}{DBLP:journals/corr/BruseFL14}
\bibitem{DBLP:journals/corr/BruseFL14}
\bibinfo{author}{Florian \surnamestart Bruse\surnameend},
  \bibinfo{author}{Michael \surnamestart Falk\surnameend} \&
  \bibinfo{author}{Martin \surnamestart Lange\surnameend}
  (\bibinfo{year}{2014}): \emph{\bibinfo{title}{The Fixpoint-Iteration
  Algorithm for Parity Games}}.
\newblock In: {\sl \bibinfo{booktitle}{GandALF}}, {\sl
  \bibinfo{series}{{EPTCS}}} \bibinfo{volume}{161}, pp.
  \bibinfo{pages}{116--130}, \doi{10.4204/EPTCS.161.12}.

\bibitemdeclare{inproceedings}{DBLP:conf/stoc/CaludeJKL017}
\bibitem{DBLP:conf/stoc/CaludeJKL017}
\bibinfo{author}{Cristian~S. \surnamestart Calude\surnameend},
  \bibinfo{author}{Sanjay \surnamestart Jain\surnameend},
  \bibinfo{author}{Bakhadyr \surnamestart Khoussainov\surnameend},
  \bibinfo{author}{Wei \surnamestart Li\surnameend} \& \bibinfo{author}{Frank
  \surnamestart Stephan\surnameend} (\bibinfo{year}{2017}):
  \emph{\bibinfo{title}{Deciding parity games in quasipolynomial time}}.
\newblock In: {\sl \bibinfo{booktitle}{{STOC}}}, \bibinfo{publisher}{{ACM}},
  pp. \bibinfo{pages}{252--263}, \doi{10.1145/3055399.3055409}.

\bibitemdeclare{inproceedings}{DBLP:conf/cav/Dijk18}
\bibitem{DBLP:conf/cav/Dijk18}
\bibinfo{author}{Tom \surnamestart van Dijk\surnameend} (\bibinfo{year}{2018}):
  \emph{\bibinfo{title}{Attracting Tangles to Solve Parity Games}}.
\newblock In: {\sl \bibinfo{booktitle}{{CAV} {(2)}}}, {\sl
  \bibinfo{series}{LNCS}} \bibinfo{volume}{10982},
  \bibinfo{publisher}{Springer}, pp. \bibinfo{pages}{198--215},
  \doi{10.1007/978-3-319-96142-2\_14}.

\bibitemdeclare{inproceedings}{DBLP:conf/tacas/Dijk18}
\bibitem{DBLP:conf/tacas/Dijk18}
\bibinfo{author}{Tom \surnamestart van Dijk\surnameend} (\bibinfo{year}{2018}):
  \emph{\bibinfo{title}{Oink: An Implementation and Evaluation of Modern Parity
  Game Solvers}}.
\newblock In: {\sl \bibinfo{booktitle}{{TACAS} {(1)}}}, {\sl
  \bibinfo{series}{LNCS}} \bibinfo{volume}{10805},
  \bibinfo{publisher}{Springer}, pp. \bibinfo{pages}{291--308},
  \doi{10.1007/978-3-319-89960-2\_16}.

\bibitemdeclare{article}{fpi}
\bibitem{fpi}
\bibinfo{author}{Tom \surnamestart van Dijk\surnameend} \& \bibinfo{author}{Bob
  \surnamestart Rubbens\surnameend} (\bibinfo{year}{2019}):
  \emph{\bibinfo{title}{Simple Fixpoint Iteration to Solve Parity Games}}.
\newblock \bibinfo{note}{Accepted at GandALF 2019}.

\bibitemdeclare{inproceedings}{DBLP:conf/focs/EmersonJ91}
\bibitem{DBLP:conf/focs/EmersonJ91}
\bibinfo{author}{E.~Allen \surnamestart Emerson\surnameend} \&
  \bibinfo{author}{Charanjit~S. \surnamestart Jutla\surnameend}
  (\bibinfo{year}{1991}): \emph{\bibinfo{title}{Tree Automata, Mu-Calculus and
  Determinacy (Extended Abstract)}}.
\newblock In: {\sl \bibinfo{booktitle}{{FOCS}}}, \bibinfo{publisher}{{IEEE}
  Computer Society}, pp. \bibinfo{pages}{368--377},
  \doi{10.1109/SFCS.1991.185392}.

\bibitemdeclare{article}{DBLP:journals/tcs/EmersonJS01}
\bibitem{DBLP:journals/tcs/EmersonJS01}
\bibinfo{author}{E.~Allen \surnamestart Emerson\surnameend},
  \bibinfo{author}{Charanjit~S. \surnamestart Jutla\surnameend} \&
  \bibinfo{author}{A.~Prasad \surnamestart Sistla\surnameend}
  (\bibinfo{year}{2001}): \emph{\bibinfo{title}{On model checking for the
  mu-calculus and its fragments}}.
\newblock {\sl \bibinfo{journal}{Theor. Comput. Sci.}}
  \bibinfo{volume}{258}(\bibinfo{number}{1-2}), pp. \bibinfo{pages}{491--522},
  \doi{10.1016/S0304-3975(00)00034-7}.

\bibitemdeclare{inproceedings}{DBLP:conf/lpar/Fearnley10}
\bibitem{DBLP:conf/lpar/Fearnley10}
\bibinfo{author}{John \surnamestart Fearnley\surnameend}
  (\bibinfo{year}{2010}): \emph{\bibinfo{title}{Non-oblivious Strategy
  Improvement}}.
\newblock In: {\sl \bibinfo{booktitle}{{LPAR} (Dakar)}}, {\sl
  \bibinfo{series}{LNCS}} \bibinfo{volume}{6355},
  \bibinfo{publisher}{Springer}, pp. \bibinfo{pages}{212--230},
  \doi{10.1007/978-3-642-17511-4\_13}.

\bibitemdeclare{inproceedings}{DBLP:conf/cav/Fearnley17}
\bibitem{DBLP:conf/cav/Fearnley17}
\bibinfo{author}{John \surnamestart Fearnley\surnameend}
  (\bibinfo{year}{2017}): \emph{\bibinfo{title}{Efficient Parallel Strategy
  Improvement for Parity Games}}.
\newblock In: {\sl \bibinfo{booktitle}{{CAV} {(2)}}}, {\sl
  \bibinfo{series}{LNCS}} \bibinfo{volume}{10427},
  \bibinfo{publisher}{Springer}, pp. \bibinfo{pages}{137--154},
  \doi{10.1007/978-3-319-63390-9\_8}.

\bibitemdeclare{article}{DBLP:journals/sttt/FearnleyJKSSW19}
\bibitem{DBLP:journals/sttt/FearnleyJKSSW19}
\bibinfo{author}{John \surnamestart Fearnley\surnameend},
  \bibinfo{author}{Sanjay \surnamestart Jain\surnameend}, \bibinfo{author}{Bart
  \surnamestart de~Keijzer\surnameend}, \bibinfo{author}{Sven \surnamestart
  Schewe\surnameend}, \bibinfo{author}{Frank \surnamestart Stephan\surnameend}
  \& \bibinfo{author}{Dominik \surnamestart Wojtczak\surnameend}
  (\bibinfo{year}{2019}): \emph{\bibinfo{title}{An ordered approach to solving
  parity games in quasi-polynomial time and quasi-linear space}}.
\newblock {\sl \bibinfo{journal}{{STTT}}}
  \bibinfo{volume}{21}(\bibinfo{number}{3}), pp. \bibinfo{pages}{325--349},
  \doi{10.1007/s10009-019-00509-3}.

\bibitemdeclare{inproceedings}{DBLP:conf/soda/FearnleyS16}
\bibitem{DBLP:conf/soda/FearnleyS16}
\bibinfo{author}{John \surnamestart Fearnley\surnameend} \&
  \bibinfo{author}{Rahul \surnamestart Savani\surnameend}
  (\bibinfo{year}{2016}): \emph{\bibinfo{title}{The Complexity of All-switches
  Strategy Improvement}}.
\newblock In: {\sl \bibinfo{booktitle}{{SODA}}}, \bibinfo{publisher}{{SIAM}},
  pp. \bibinfo{pages}{130--139}, \doi{10.1137/1.9781611974331.ch10}.

\bibitemdeclare{inproceedings}{DBLP:conf/lics/Friedmann09}
\bibitem{DBLP:conf/lics/Friedmann09}
\bibinfo{author}{Oliver \surnamestart Friedmann\surnameend}
  (\bibinfo{year}{2009}): \emph{\bibinfo{title}{An Exponential Lower Bound for
  the Parity Game Strategy Improvement Algorithm as We Know it}}.
\newblock In: {\sl \bibinfo{booktitle}{{LICS}}}, \bibinfo{publisher}{{IEEE}
  Computer Society}, pp. \bibinfo{pages}{145--156}, \doi{10.1109/LICS.2009.27}.

\bibitemdeclare{article}{DBLP:journals/lmcs/Friedmann11}
\bibitem{DBLP:journals/lmcs/Friedmann11}
\bibinfo{author}{Oliver \surnamestart Friedmann\surnameend}
  (\bibinfo{year}{2011}): \emph{\bibinfo{title}{An Exponential Lower Bound for
  the Latest Deterministic Strategy Iteration Algorithms}}.
\newblock {\sl \bibinfo{journal}{Logical Methods in Computer Science}}
  \bibinfo{volume}{7}(\bibinfo{number}{3}), \doi{10.2168/LMCS-7(3:23)2011}.

\bibitemdeclare{article}{DBLP:journals/ita/Friedmann11}
\bibitem{DBLP:journals/ita/Friedmann11}
\bibinfo{author}{Oliver \surnamestart Friedmann\surnameend}
  (\bibinfo{year}{2011}): \emph{\bibinfo{title}{Recursive algorithm for parity
  games requires exponential time}}.
\newblock {\sl \bibinfo{journal}{{RAIRO} - Theor. Inf. and Applic.}}
  \bibinfo{volume}{45}(\bibinfo{number}{4}), pp. \bibinfo{pages}{449--457},
  \doi{10.1051/ita/2011124}.

\bibitemdeclare{inproceedings}{DBLP:conf/ipco/Friedmann11}
\bibitem{DBLP:conf/ipco/Friedmann11}
\bibinfo{author}{Oliver \surnamestart Friedmann\surnameend}
  (\bibinfo{year}{2011}): \emph{\bibinfo{title}{A Subexponential Lower Bound
  for Zadeh's Pivoting Rule for Solving Linear Programs and Games}}.
\newblock In: {\sl \bibinfo{booktitle}{{IPCO}}}, {\sl \bibinfo{series}{Lecture
  Notes in Computer Science}} \bibinfo{volume}{6655},
  \bibinfo{publisher}{Springer}, pp. \bibinfo{pages}{192--206},
  \doi{10.1007/978-3-642-20807-2\_16}.

\bibitemdeclare{inproceedings}{Friedmann2012}
\bibitem{Friedmann2012}
\bibinfo{author}{Oliver \surnamestart Friedmann\surnameend}
  (\bibinfo{year}{2012}): \emph{\bibinfo{title}{A subexponential lower bound
  for the Least Recently Considered rule for solving linear programs and
  games}}.
\newblock In: {\sl \bibinfo{booktitle}{{GAMES}}}.

\bibitemdeclare{article}{DBLP:journals/dam/Friedmann13}
\bibitem{DBLP:journals/dam/Friedmann13}
\bibinfo{author}{Oliver \surnamestart Friedmann\surnameend}
  (\bibinfo{year}{2013}): \emph{\bibinfo{title}{A superpolynomial lower bound
  for strategy iteration based on snare memorization}}.
\newblock {\sl \bibinfo{journal}{Discrete Applied Mathematics}}
  \bibinfo{volume}{161}(\bibinfo{number}{10-11}), pp.
  \bibinfo{pages}{1317--1337}, \doi{10.1016/j.dam.2013.02.007}.

\bibitemdeclare{inproceedings}{DBLP:conf/soda/FriedmannHZ11}
\bibitem{DBLP:conf/soda/FriedmannHZ11}
\bibinfo{author}{Oliver \surnamestart Friedmann\surnameend},
  \bibinfo{author}{Thomas~Dueholm \surnamestart Hansen\surnameend} \&
  \bibinfo{author}{Uri \surnamestart Zwick\surnameend} (\bibinfo{year}{2011}):
  \emph{\bibinfo{title}{A subexponential lower bound for the Random Facet
  algorithm for Parity Games}}.
\newblock In: {\sl \bibinfo{booktitle}{{SODA}}}, \bibinfo{publisher}{{SIAM}},
  pp. \bibinfo{pages}{202--216}, \doi{10.1137/1.9781611973082.19}.

\bibitemdeclare{inproceedings}{DBLP:conf/atva/FriedmannL09}
\bibitem{DBLP:conf/atva/FriedmannL09}
\bibinfo{author}{Oliver \surnamestart Friedmann\surnameend} \&
  \bibinfo{author}{Martin \surnamestart Lange\surnameend}
  (\bibinfo{year}{2009}): \emph{\bibinfo{title}{Solving Parity Games in
  Practice}}.
\newblock In: {\sl \bibinfo{booktitle}{{ATVA}}}, {\sl \bibinfo{series}{{LNCS}}}
  \bibinfo{volume}{5799}, \bibinfo{publisher}{Springer}, pp.
  \bibinfo{pages}{182--196}, \doi{10.1007/978-3-642-04761-9\_15}.

\bibitemdeclare{inproceedings}{DBLP:journals/corr/GazdaW13}
\bibitem{DBLP:journals/corr/GazdaW13}
\bibinfo{author}{Maciej \surnamestart Gazda\surnameend} \& \bibinfo{author}{Tim
  A.~C. \surnamestart Willemse\surnameend} (\bibinfo{year}{2013}):
  \emph{\bibinfo{title}{Zielonka's Recursive Algorithm: dull, weak and
  solitaire games and tighter bounds}}.
\newblock In: {\sl \bibinfo{booktitle}{GandALF}}, {\sl
  \bibinfo{series}{{EPTCS}}} \bibinfo{volume}{119}, pp. \bibinfo{pages}{7--20},
  \doi{10.4204/EPTCS.119.4}.

\bibitemdeclare{proceedings}{DBLP:conf/dagstuhl/2001automata}
\bibitem{DBLP:conf/dagstuhl/2001automata}
\bibinfo{editor}{Erich \surnamestart Gr{\"{a}}del\surnameend},
  \bibinfo{editor}{Wolfgang \surnamestart Thomas\surnameend} \&
  \bibinfo{editor}{Thomas \surnamestart Wilke\surnameend}, editors
  (\bibinfo{year}{2002}): \emph{\bibinfo{title}{Automata, Logics, and Infinite
  Games: {A} Guide to Current Research}}. {\sl \bibinfo{series}{LNCS}}
  \bibinfo{volume}{2500}, \bibinfo{publisher}{Springer},
  \doi{10.1007/3-540-36387-4}.

\bibitemdeclare{article}{DBLP:journals/ipl/Jurdzinski98}
\bibitem{DBLP:journals/ipl/Jurdzinski98}
\bibinfo{author}{Marcin \surnamestart Jurdzinski\surnameend}
  (\bibinfo{year}{1998}): \emph{\bibinfo{title}{Deciding the Winner in Parity
  Games is in {UP} $\cap$ co-{UP}}}.
\newblock {\sl \bibinfo{journal}{Inf. Process. Lett.}}
  \bibinfo{volume}{68}(\bibinfo{number}{3}), pp. \bibinfo{pages}{119--124},
  \doi{10.1016/S0020-0190(98)00150-1}.

\bibitemdeclare{inproceedings}{DBLP:conf/stacs/Jurdzinski00}
\bibitem{DBLP:conf/stacs/Jurdzinski00}
\bibinfo{author}{Marcin \surnamestart Jurdzinski\surnameend}
  (\bibinfo{year}{2000}): \emph{\bibinfo{title}{Small Progress Measures for
  Solving Parity Games}}.
\newblock In: {\sl \bibinfo{booktitle}{{STACS}}}, {\sl \bibinfo{series}{LNCS}}
  \bibinfo{volume}{1770}, \bibinfo{publisher}{Springer}, pp.
  \bibinfo{pages}{290--301}, \doi{10.1007/3-540-46541-3\_24}.

\bibitemdeclare{inproceedings}{DBLP:conf/lics/JurdzinskiL17}
\bibitem{DBLP:conf/lics/JurdzinskiL17}
\bibinfo{author}{Marcin \surnamestart Jurdzinski\surnameend} \&
  \bibinfo{author}{Ranko \surnamestart Lazic\surnameend}
  (\bibinfo{year}{2017}): \emph{\bibinfo{title}{Succinct progress measures for
  solving parity games}}.
\newblock In: {\sl \bibinfo{booktitle}{{LICS}}}, \bibinfo{publisher}{{IEEE}
  Computer Society}, pp. \bibinfo{pages}{1--9},
  \doi{10.1109/LICS.2017.8005092}.

\bibitemdeclare{article}{DBLP:journals/tcs/Kozen83}
\bibitem{DBLP:journals/tcs/Kozen83}
\bibinfo{author}{Dexter \surnamestart Kozen\surnameend} (\bibinfo{year}{1983}):
  \emph{\bibinfo{title}{Results on the Propositional mu-Calculus}}.
\newblock {\sl \bibinfo{journal}{Theor. Comput. Sci.}} \bibinfo{volume}{27},
  pp. \bibinfo{pages}{333--354}, \doi{10.1016/0304-3975(82)90125-6}.

\bibitemdeclare{inproceedings}{DBLP:conf/stoc/KupfermanV98}
\bibitem{DBLP:conf/stoc/KupfermanV98}
\bibinfo{author}{Orna \surnamestart Kupferman\surnameend} \&
  \bibinfo{author}{Moshe~Y. \surnamestart Vardi\surnameend}
  (\bibinfo{year}{1998}): \emph{\bibinfo{title}{Weak Alternating Automata and
  Tree Automata Emptiness}}.
\newblock In: {\sl \bibinfo{booktitle}{{STOC}}}, \bibinfo{publisher}{{ACM}},
  pp. \bibinfo{pages}{224--233}, \doi{10.1145/276698.276748}.

\bibitemdeclare{article}{DBLP:journals/apal/McNaughton93}
\bibitem{DBLP:journals/apal/McNaughton93}
\bibinfo{author}{Robert \surnamestart McNaughton\surnameend}
  (\bibinfo{year}{1993}): \emph{\bibinfo{title}{Infinite Games Played on Finite
  Graphs}}.
\newblock {\sl \bibinfo{journal}{Ann. Pure Appl. Logic}}
  \bibinfo{volume}{65}(\bibinfo{number}{2}), pp. \bibinfo{pages}{149--184},
  \doi{10.1016/0168-0072(93)90036-D}.

\bibitemdeclare{inproceedings}{DBLP:conf/cav/MeyerSL18}
\bibitem{DBLP:conf/cav/MeyerSL18}
\bibinfo{author}{Philipp~J. \surnamestart Meyer\surnameend},
  \bibinfo{author}{Salomon \surnamestart Sickert\surnameend} \&
  \bibinfo{author}{Michael \surnamestart Luttenberger\surnameend}
  (\bibinfo{year}{2018}): \emph{\bibinfo{title}{Strix: Explicit Reactive
  Synthesis Strikes Back!}}
\newblock In: {\sl \bibinfo{booktitle}{{CAV} {(1)}}}, {\sl
  \bibinfo{series}{Lecture Notes in Computer Science}} \bibinfo{volume}{10981},
  \bibinfo{publisher}{Springer}, pp. \bibinfo{pages}{578--586},
  \doi{10.1007/978-3-319-96145-3\_31}.

\bibitemdeclare{article}{DBLP:journals/corr/abs-1904-12446}
\bibitem{DBLP:journals/corr/abs-1904-12446}
\bibinfo{author}{Pawel \surnamestart Parys\surnameend} (\bibinfo{year}{2019}):
  \emph{\bibinfo{title}{Parity Games: Zielonka's Algorithm in Quasi-Polynomial
  Time}}.
\newblock {\sl \bibinfo{journal}{CoRR}} \bibinfo{volume}{abs/1904.12446}.

\bibitemdeclare{inproceedings}{DBLP:conf/wia/StasioMPV16}
\bibitem{DBLP:conf/wia/StasioMPV16}
\bibinfo{author}{Antonio~Di \surnamestart Stasio\surnameend},
  \bibinfo{author}{Aniello \surnamestart Murano\surnameend},
  \bibinfo{author}{Giuseppe \surnamestart Perelli\surnameend} \&
  \bibinfo{author}{Moshe~Y. \surnamestart Vardi\surnameend}
  (\bibinfo{year}{2016}): \emph{\bibinfo{title}{Solving Parity Games Using an
  Automata-Based Algorithm}}.
\newblock In: {\sl \bibinfo{booktitle}{{CIAA}}}, {\sl \bibinfo{series}{{LNCS}}}
  \bibinfo{volume}{9705}, \bibinfo{publisher}{Springer}, pp.
  \bibinfo{pages}{64--76}, \doi{10.1007/978-3-319-40946-7\_6}.

\bibitemdeclare{article}{DBLP:journals/tcs/Zielonka98}
\bibitem{DBLP:journals/tcs/Zielonka98}
\bibinfo{author}{Wieslaw \surnamestart Zielonka\surnameend}
  (\bibinfo{year}{1998}): \emph{\bibinfo{title}{Infinite Games on Finitely
  Coloured Graphs with Applications to Automata on Infinite Trees}}.
\newblock {\sl \bibinfo{journal}{Theor. Comput. Sci.}}
  \bibinfo{volume}{200}(\bibinfo{number}{1-2}), pp. \bibinfo{pages}{135--183},
  \doi{10.1016/S0304-3975(98)00009-7}.

\end{thebibliography}

\end{document}